\documentclass[conference]{IEEEtran}
\synctex=1

%
\newcount\DraftStatus  
\DraftStatus=1   
\usepackage{comment}
%
\excludecomment{JournalOnly}  
\includecomment{ConferenceOnly}  
\excludecomment{TulipStyle}
%
%
%
\usepackage{comment}
\usepackage[draft,colorinlistoftodos,color=orange!30]{todonotes}
\usepackage{color}

%


%
%
\usepackage{graphicx}
\graphicspath{{./figures/}}

\newtheorem{assumption}{Assumption}

\usepackage{graphicx}
\usepackage{algorithm}
\usepackage[noend]{algpseudocode}
\usepackage{amsmath}
\usepackage{amssymb}
\renewcommand{\algorithmicrequire}{\textbf{Blockchain process:}}
\renewcommand{\algorithmicensure}{\textbf{Federated learning process:}}
\usepackage{breqn}
\usepackage{subcaption}
\usepackage{multirow}
\usepackage{psfrag}
\usepackage{url}
\usepackage{hyperref}
\usepackage{stmaryrd}
\usepackage{cleveref}
\crefname{assumption}{assumption}{assumptions}
\usepackage{booktabs}
\usepackage{rotating}
\usepackage{colortbl}
\usepackage{paralist}
\usepackage{epstopdf}
\usepackage{nag}
\usepackage{microtype}
\usepackage{siunitx}
\usepackage{nicefrac}
\usepackage{lipsum}
\usepackage[english]{babel}
\usepackage[pangram]{blindtext}
\usepackage{tikz}
\usepackage{tabularx}
\usetikzlibrary{fit,positioning,arrows.meta,shapes,arrows}
\usepackage[numbers]{natbib}
%

\usepackage{mathtools}
%
%

\numberwithin{equation}{section}





\newtheorem{theorem}{Theorem}
\newtheorem{lemma}{Lemma}

\newtheorem{definition}{Definition}



%

\IEEEoverridecommandlockouts
\def\BibTeX{{\rm B\kern-.05em{\sc i\kern-.025em b}\kern-.08em
    T\kern-.1667em\lower.7ex\hbox{E}\kern-.125emX}}

%
\begin{document}
%
%
\title{Post Quantum Secure Blockchain-based Federated Learning for Mobile Edge Computing}

\author{\IEEEauthorblockN{Rongxin Xu}
\IEEEauthorblockA{\textit{Business School} \\
\textit{Hunan University}\\
Changsha 410082, China \\
rongxinxu@hnu.edu.cn}
\and
\IEEEauthorblockN{Shiva Raj Pokhrel}
\IEEEauthorblockA{\textit{School of Information Technology} \\
	\textit{Deakin University}\\
	Geelong, Australia \\
	shiva.pokhrel@deakin.edu.au}
\and
\IEEEauthorblockN{Qiujun Lan}
\IEEEauthorblockA{Business School \\
	\textit{Hunan University}\\
	Changsha 410082, China \\
	lanqiujun@hnu.edu.cn}
\and
\IEEEauthorblockN{Gang Li}
\IEEEauthorblockA{\textit{School of Information Technology} \\
\textit{Deakin University}\\
Geelong, Australia\\
gang.li@deakin.edu.au}
}

\maketitle

\begin{abstract}
	Mobile Edge Computing (MEC) has been a promising paradigm for communicating and edge processing of data on the move. We aim to employ  Federated Learning (FL) and prominent features of blockchain into MEC architecture such as connected autonomous vehicles to enable complete decentralization, immutability, and rewarding mechanisms simultaneously. FL is advantageous for mobile devices with constrained connectivity since it requires model updates to be delivered to a central point instead of substantial amounts of data communication. For instance, FL in autonomous, connected vehicles can increase data diversity and allow model customization, and predictions are possible even when the vehicles are not connected (by exploiting their local models) for short times. However, existing synchronous FL and Blockchain incur extremely high communication costs due to mobility-induced impairments and do not apply directly to MEC networks.  We propose a fully asynchronous Blockchained Federated Learning (BFL) framework referred to as BFL-MEC, in which the mobile clients and their models evolve independently yet guarantee stability in the global learning process. More importantly, we employ \textit{post-quantum secure} features over BFL-MEC to verify the client's identity and defend against malicious attacks. All of our design assumptions and results are evaluated with extensive simulations.
\end{abstract}

\begin{IEEEkeywords}
Edge Compuying, Federated Learning, Blockchain, Incentive, Security and Privacy
\end{IEEEkeywords}


Federated learning (FL) has emerged as a 
promising approach to tackle the limitations of traditional centralized machine learning (ML) methods~\cite{mcmahanCommunicationEfficientLearningDeep2017,yangFLASHHeterogeneityAwareFederated2022}, 
which are unable to handle the challenges posed by big data and complex models. 
By keeping the raw data on the end devices, 
also known as \textit{clients}~\cite{pokhrel_federated_2020}, 
FL ensures data privacy and ownership. 
The system uses a centralized global server to aggregate updates from clients, 
and iteratively distributes the new global models back to the clients. 
However, the traditional FL setup based on a centralized server has several drawbacks, 
including \emph{single point of failure} and instability~\cite{romanFeaturesChallengesSecurity2013}. 
Furthermore, 
FL is vulnerable to attacks by malicious or compromised clients, 
who can modify local gradients and launch inference attacks~\cite{nasr_comprehensive_2019}. 
These attacks can lead to model poisoning, 
which undermines the integrity and reliability of the whole system.

Blockchain and FL are two distributed technologies 
that can complement each other to create a secure, transparent, 
and incentive-compatible system for decentralized learning and distributed computing~\cite{nguyen_federated_2021, pokhrel_federated_2020}. 
The inherent properties of blockchain, 
such as immutability and traceability, 
offer several benefits to FL, 
making it an attractive solution for decentralized machine learning. 
By incorporating blockchain into FL, 
it is possible to create a robust, intelligent, 
and privacy-preserving framework. 
\emph{Blockchain-based Federated Learning} (BFL)~\cite{pokhrel_federated_2020, Xu2023} 
is one such solution that leverages 
the benefits of both blockchain and FL to enable decentralized learning. 
In BFL, local updates and global models recorded on the blockchain 
to ensure security and promote traceability, 
and clients can seamlessly acquire new global parameters 
through a consensus mechanism. 
This approach enables the system to provide an incentive mechanism for 
clients and creates a transparent and trustworthy environment 
for distributed machine learning.

More importantly, 
recent advances in edge computing ensure good communication 
and data transmission by enabling offloading of computing tasks to 
edge nodes with close proximity to the clients~\cite{shiEdgeComputingVision2016,sunAcceleratingConvergenceFederated2023}. 
Applying BFL to mobile edge computing (MEC) networks can shape 
stunning future paradigms for the 6G era. 
Examples include autonomous vehicle networks, 
mobile crowdsensing and metaverse. 

However, with the advent of quantum computing, 
security has become a significant concern. 
Even the most secure encryption methods currently 
in use can be broken by quantum computers, 
which makes the confidentiality of data transmission in BFL vulnerable. 
For instance, 
BFL typically uses the RSA algorithm to sign and encrypt the uploaded local gradients, 
which is aimed at protecting data from being tampered with or forged in transmission. 
Nevertheless, 
a malicious attacker equipped with a quantum computer 
could undermine the security of BFL by attacking the RSA algorithm 
to gain access to plaintext or even forge signatures, 
thereby launching severe privacy and model attacks. 
Therefore, 
to counter the increasing quantum threat, 
we are committed to developing new BFL-MEC systems 
that employ post-quantum secure design to ensure 
that quantum computing algorithms such as 
Shor's cannot efficiently solve the 
cryptographic problems used in our system.

In addition,
MEC systems require a completely asynchronous design~\cite{liuAdaptiveAsynchronousFederated2023} 
because it is not realistic for a constantly moving client to maintain 
a long communication with the edge nodes, 
in contrast to FL's synchronization assumption~\cite{Xu2023}. 
Additionally, 
while blockchain rewards miners that successfully mined the block, 
this approach does not incentivize clients to make significant contributions 
to global updates in BFL~\cite{wittDecentralIncentivizedFederated2023}. 
As a result, 
a unique incentive mechanism is necessary to attract potential participants 
and retain clients who make substantial contributions. 
However, 
current works have not fully explored the design 
and implementation of an incentive mechanism in BFL, 
particularly in an asynchronous setting, 
also known as vanilla BFL, 
which has limited its practical applications in MEC networks~\cite{zhangScalableLowLatencyFederated2022}.
Specifically, 
we summarize the challenges of applying vanilla BFL 
to MEC networks as follows.
\begin{itemize}
	\item \textbf{Fully asynchronous design.}
	FL has a periodic 
	learning-updating-waiting process 
	while the blockchain keeps running. 
	In vanilla BFL~\cite{pokhrel_federated_2020}, 
	the blockchain is included in the scope of the synchronicity assumption, 
	which means the working state of the blockchain will 
	exactly match the learning process of FL~\cite{Xu2023}. 
	For example, 
	the blockchain will not run until the learning process starts. 
	Such an assumption is clearly unrealistic and also undermines 
	the security basis of the blockchain, 
	that is, 
	the competition between computing nodes.
	At the same time, 
	the nature of MEC leads to the possibility that clients may 
	quickly lose communication with an edge node. 
	Overcoming this issue is a challenging undertaking 
	because discarding some of the local updates can 
	have a negative impact on global learning. 
	This problem can become even more complex when 
	the local updates recorded in the blockchain 
	fail to reflect the actual FL stage, 
	resulting in empty blocks~\cite{bao_flchain_2019}.
	Therefore, 
	it is crucial to rethink BFL for a greenfield asynchronous design, 
	especially given the loss of communication in the MEC network.
	As a result, redeveloping the BFL to harness fully asynchronous aspects is a non-trivial task, 
	especially given the problem of communication loss in the MEC network. 
	
	\item \textbf{Contribution-based incentive.}
	One of the benefits of blockchain in FL is its ability to provide incentives, 
	as it rewards nodes that successfully mine blocks. 
	However, BFL requires a different incentive mechanism to reward clients 
	who contribute more to the global aggregation, 
	especially in data-intensive tasks~\cite{sunPainFLPersonalizedPrivacyPreserving2021}. 
	The challenge is to differentiate client contributions 
	without relying on self-reported contributions, 
	as this could lead to cheating by clients. 
	BFL cannot easily verify clients' dishonesty due to 
	limitations in checking raw data. 
	Vanilla BFL mainly relies on self-reported contributions 
	or raw data checks to determine rewards, 
	which is insufficient for a secure and 
	reliable incentive mechanism.
	
	\item \textbf{Privacy-Transparency trade-off.}
	To ensure privacy, 
	it is crucial to avoid making any data that might reveal 
	sensitive information public to all nodes in the blockchain network. 
	In the case of vanilla BFL, 
	recording all data to complete the same round of learning can 
	lead to the generation of more blocks. However, 
	since the block size is limited, it is not feasible to record 
	all data in the block, 
	as this can lead to large blocks and increase transfer time. 
	As a result, 
	the delay of vanilla BFL can be significantly high. 
	Therefore, 
	careful consideration should be given to the data 
	recorded in the block to reduce the delay of BFL.
	
	\item \textbf{Malicious attack.} 
	In real-world scenarios, malicious attacks are everywhere. 
	There are various kinds of malicious attacks against BFL. 
	For example, 
	nodes curious about the data of other participants may send 
	carefully modified local gradients and then observe changes 
	in global gradients to launch membership inference attacks. 
	Attackers may send fake local gradients by controlling clients 
	or intercepting their communication to cause global model poisoning~\cite{luoSVFLEfficientSecure2022}. 
	While it is possible to detect data that has been tampered 
	with during transmission through digital signature schemes, 
	existing schemes are no longer secure under the threat of quantum computers. 
	Therefore, 
	it is crucial to employ a quantum-secure signature scheme 
	while developing a mechanism that minimizes the impact of attacks 
	that cannot be verified by the signature scheme 
	(eg, the behavior of the participants themselves).
\end{itemize}

Hence, 
in order to implement BFL in a practical way, 
we need to consider the asynchronous design and efficient incentive mechanism. 
To this end, 
we improve the vanilla BFL framework by 
introducing a loosely coupled architecture, mobility tolerance, 
contribution-based incentive mechanism, and quantum-secure signature scheme. 
These challenges have inspired us to develop \emph{BFL-MEC}, 
that is designed for MEC networks. 
\emph{BFL-MEC} solves the problems present in vanilla BFL 
and provides a fully asynchronous and incentive-based redesign 
that enhances the framework's performance and security.

In summary, our proposed BFL-MEC framework presents 
significant contributions to the development of BFL in MEC networks, 
which are detailed as follows:
\begin{enumerate}
	\item We develop a fully asynchronous blockchain-based federated learning 
	framework by designing separate working strategies for edge nodes and clients, 
	thus, 
	it can tolerate communication loss due to mobility and 
	is suitable for MEC network and its future paradigm.
	
	\item We propose a novel incentive mechanism that evaluates 
	client contributions using clustering algorithms 
	without access to raw data. 
	This can further resist malicious attacks by discarding strategy 
	and distributing personalized rewards based on contributions, 
	thus motivating honest clients to keep contributing to the learning process.
	
	\item 
	We propose a novel weighted aggregation method in \emph{BFL-MEC}, 
	which assigns weights to clients based on their contributions. 
	This method ensures faster convergence under arbitrary data distribution 
	and minimizes the impact of malicious behaviors from the clients themselves.
	
	\item We employ a signature scheme based on post-quantum cryptography, 
	which resists the threat of quantum computers 
	while providing faster signing and verifying speeds.
\end{enumerate}

This paper is structured as follows: 
We first provide a brief overview of BFL, 
quantum computer threats ,and related works in \Cref{sec-overview}. 
Next, in \Cref{sec-modeling}, 
we present our proposed \emph{BFL-MEC} framework and 
highlight the novel insights that address the aforementioned challenges. 
We then discuss the performance of \emph{BFL-MEC} from both 
the client and edge node perspectives in \Cref{sec-summary}. 
In \Cref{sec-validation}, 
we present the experimental results that demonstrate 
the performance, latency, and security of \emph{BFL-MEC}. 
Finally, we conclude our work and discuss 
future research directions in \Cref{sec-conclusions}.

\section{PRELIMINARIES and Related Work}\label{sec-overview}
\subsection{Blockchain and Federated Learning}
By a combination of a mining competition and a consensus process, 
nodes in a blockchain network maintain a distributed ledger 
in which verified transactions are recorded securely in units known as \emph{blocks}. 
When a new block is generated, 
it is broadcasted across the network, 
and any nodes that receive the block will immediately 
cease processing transactions.
Blockchain has the following features:
\begin{itemize}
	\item \textbf{Distributed}. 
	The blockchain consists of distributed computing nodes, 
	which may have different geographic locations and 
	communicate over the network to maintain the ledger jointly.
	
	\item \textbf{Decentralization}. 
	Instead of a single authoritative central entity providing credit, 
	blockchain relies on the complete autonomy of all nodes. 
	Nodes use encrypted electronic evidence to gain trust and resolve conflicts 
	by themselves through a well-designed consensus mechanism, 
	thus forming a peer-to-peer network.
	
	\item \textbf{Security}. 
	All nodes record each transaction and 
	resolve conflicts using the majority approval, 
	so tampering is complicated when the majority of nodes are honest. 
	Meanwhile, 
	other nodes will verify the proof of work and 
	transactions of the broadcast block, 
	and the cryptographic approach ensures that 
	any modification will be detected.
	
	\item \textbf{Transparency}. 
	The information recorded on the blockchain 
	is stored in all nodes 
	and is entirely public, 
	so any entity can access it.
	
	\item \textbf{Incentives}. 
	Blockchain rewards miners who successfully generate valid blocks 
	thus inspiring nodes to contribute and 
	to ensure secure and continuous operation of the network. 
	
	\item \textbf{Asynchrony}. 
	Each node in the blockchain works independently, 
	depending on the information received and 
	the consensus mechanism to change the active status. 
	For example, 
	miners do not start mining process simultaneously, 
	and new transactions do not occur at the same time.
\end{itemize}

FL is a distributed training approach enabling end devices 
to independently train their models and pool their 
intermediate information on a central server 
to deliver global knowledge. 
Its purpose is to eliminate \emph{data islands}
~\footnote{Think of data as a vast, 
	ever-changing ocean, 
	from which various organizations, 
	like banks, siphon off information to store it privately. 
	As a result, 
	these untouched pieces of information eventually form separate islands.} 
and reap the advantages of aggregate modeling.
In the FL process, 
at the beginning of each communication round, 
clients update their local models with their own data and compute gradients. 
These gradients are then uploaded to the central server, 
which aggregates them to compute the global update. 
The central server then returns the global update to the clients, 
who use it to update their local models independently. 
This iterative process allows FL to evolve dynamically from one round to another, 
resulting in a more accurate and efficient model.
FL has the following features:
\begin{itemize}
	\item \textbf{Distributed}. 
	The clients in federated learning are distributed and 
	perform their own learning at the end devices of the network.
	
	\item \textbf{Centralization}. 
	Federated learning relies on an authoritative central entity 
	to provide credit and collect all local gradients for global updates.
	
	\item \textbf{Privacy}. 
	The local learning happens in the clients, 
	and they upload the locally updated gradients. 
	Data never leave the local nodes.
	
	\item \textbf{Synchrony}. 
	In general, 
	all nodes in FL work based on communication rounds and 
	complete the entire learning process in one round 
	before starting the next round.
\end{itemize}

\subsection{Blockchained Federated Learning (BFL)}\label{sec-bfl}

\begin{figure}[t]
	\centering
	\includegraphics[width=0.79\linewidth]{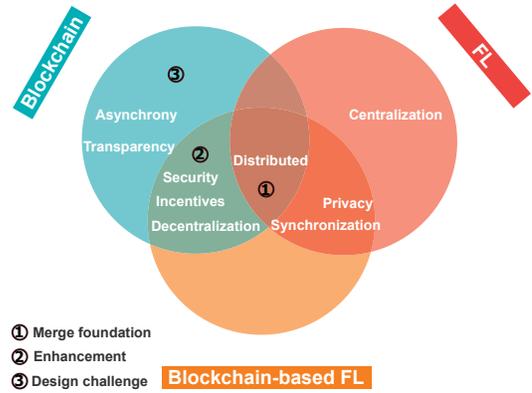}
	\caption{Opportunities and challenges of BFL}
	\label{fig:venn}
\end{figure}

The opportunities and challenges of integrating Blockchain 
and FL are visualized in \Cref{fig:venn}. 
The distributed nature of both blockchain and FL offers 
a strong foundation for their seamless integration.
Moreover, 
the decentralization feature of blockchain 
empowers FL with better autonomy. 
It is worth noting that 
blockchain, 
as a data encryption and value-driven system, 
always guarantees the security of the data exchange process in the BFL.
However, 
BFL design consists of non-trivial challenges as shown in \Cref{fig:venn}. 
As an illustration, 
the asynchronous characteristic of blockchain necessitates a thorough integration 
with the communication rounds mechanism of FL.
Also, 
there is a huge trade-off 
between the transparency of blockchain and the privacy of FL. 
We must carefully determine which kind of data should be public to avoid weakening privacy.

\subsection{Quantum Computer Threats}
In the future, quantum computers will become one of 
the primary computing infrastructures 
due to their powerful parallel computing capabilities. 
In contrast to conventional computers, 
quantum computers run quantum bits based on quantum entities 
such as photons, electrons, and ions. 
Its computational power is several orders of magnitude higher 
than that of conventional computers and shows a stunning 
increase as the number of quantum bits increases.

The competition for quantum computing has long been underway, 
and many tech giants and research institutions have already 
started building their infrastructure. 
Among them, 
IBM and Google have already surpassed 100 quantum bits in 2021-2022, 
and plan to surpass 1000 quantum bits within 2023, 
and 1 million in 2030, respectively.

The cornerstone of cryptography is mathematical puzzles 
that cannot be solved in a short time, 
whereas the advent of quantum computers makes it possible 
to perform brute force cracking in an acceptable time. 
Against this background, 
concerns about the security of current cryptographic 
systems have been raised. 
For example, 
widely used cryptographic algorithms such as RCDSA, RSA, and DSA 
have been theoretically proven to be impervious to quantum attacks. 
As a result, 
quantum computers pose a widespread threat to 
systems protected by these methods. 
For example, 
blockchains and edge computing systems 
that use RSA for signature authentication.

\subsection{Related Work}
Several notable studies have explored the integration of BFL, 
including works cited as
~\cite{pokhrel_federated_2020, Xu2023, awan_poster_2019,majeed_flchain_2019,lu_blockchain_2020,li_blockchain-based_2021}. 
To address privacy concerns, 
\cite{awan_poster_2019} proposed a variant of the \emph{Paillier} cryptosystem 
that supports homomorphic encryption and proxy re-encryption, 
while \cite{majeed_flchain_2019} implemented a BFL framework called \emph{FLchain} 
that leverages ``channels" to store local model gradients in blockchain blocks. 
Components such as \emph{Ethereum} have extended the capabilities of \emph{FLchain}, 
enabling execution of global updates.

To address centralized trust issues, 
\cite{lu_blockchain_2020} incorporated differential privacy 
into permissioned blockchains, 
and \cite{li_blockchain-based_2021} proposed a BFL framework that 
stores the global model and exchanges local updates via blockchain, 
effectively eliminating the need for a central server and mitigating privacy risks. 
These works have aimed to improve the privacy and security of the vanilla BFL framework.

However, vanilla BFL design still faces several challenges~\cite{pokhrel_federated_2020, Xu2023}. 
For example, 
the asynchronous nature of blockchain requires careful integration 
with the FL communication rounds mechanism and therefore the frameworks proposed in~\cite{pokhrel_federated_2020, Xu2023} are not practicable for MEC. In addition, in our earlier BFL frameworks~\cite{pokhrel_federated_2020, Xu2023}, 
there is little consideration for evaluating the clients' contributions, 
and those that do may raise privacy concerns
~\cite{pokhrel_federated_2020,kim_blockchained_2020}. 
As a result, 
there is a need for an approach that overcomes these 
limitations and offers improved performance and security.

In this paper, 
we propose \emph{BFL-MEC}, 
a novel approach to BFL that features a full asynchronous design, 
a fair aggregation scheme, and a contribution-based incentive mechanism. 
\emph{BFL-MEC} enhances the security and robustness of BFL for 
various MEC applications by addressing key limitations 
of the vanilla BFL framework. 
By providing an incentive mechanism for clients and implementing 
a fair aggregation scheme, 
\emph{BFL-MEC} aims to ensure that clients' contributions 
are accurately and objectively evaluated. 
Overall, \emph{BFL-MEC} offers several advantages over 
the vanilla BFL framework and provides a strong foundation 
for the integration of blockchain and federated learning.

\section{BFL-MEC Design}\label{sec-modeling}
This section outlines the proposed \emph{BFL-MEC} approach 
and provides a detailed algorithm description. 
We explain how blockchain and federated learning can 
be effectively integrated by taking advantage of their internal workings. 
\Cref{tab: notations} presents a summary of the notations used in this paper.

\begin{table}[htbp]
	\centering
	\caption{Summary of notations}
	\begin{tabular}{ll}
		\toprule
		\multicolumn{2}{c}{Notations used in this paper} \\
		\midrule
		${C_i}$    & A client in BFL-MEC with index $i$. \\
		${S_k}$    & A edge node in \emph{BFL-MEC} with index $k$. \\
		$\mathcal D_i$    & The data set held by the client $i$ at a certain moment. \\
		$\eta$ & The learning rate of the client's local model. \\
		$E$   & The number of training epochs for the client's model. \\
		$B$    & The batch size used by the clients. \\
		$n$    & The number of clients. \\
		$m$    & The number of edge nodes. \\
		$\mathcal B$    & Dataset split by batch size. \\
		$w$   & The calculated gradient. \\
		\bottomrule
	\end{tabular}%
	\label{tab: notations}%
\end{table}%

\subsection{System Overview}

\begin{figure}[htbp]
	\centering
	\includegraphics[width=0.8\linewidth]{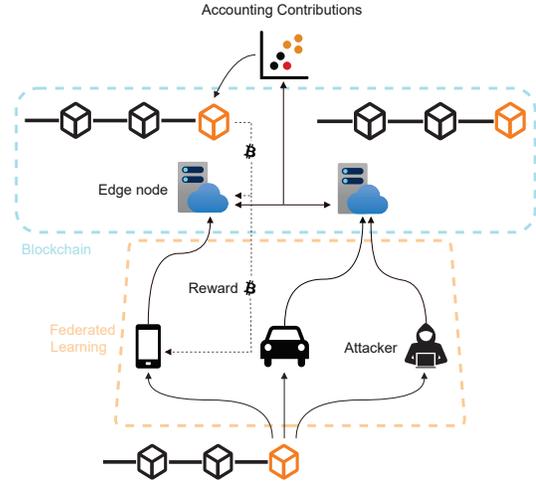}
	\caption{The framework of \emph{BFL-MEC}}
	\label{fig:bfl}
\end{figure}
\Cref{fig:bfl} provides a top-level perspective 
of the \emph{BFL-MEC} framework, 
which incorporates multiple edge nodes instead of a central server 
for federated learning. 
By doing so, 
the framework avoids the possibility of a \emph{single point of failure}.
Also, 
they play the role of miners in the blockchain, 
jointly maintaining a ledger and mining blocks.
The whole process can be seen as asynchronous communication between
$n$ clients $\{ {C_i}\} _{i = 1}^n$ 
and a set of $m$ miners 
$\{ {S_k}\} _{k = 1}^m$ 
to handle the blockchain process. 
However, 
instead of sending transaction information, 
the clients send the gradients of their local models to the edge nodes, 
who also compute the global gradients and record them on the block chain for sharing. 
The client $C_i$ move around the environment and collect new data, 
and the resulting dataset ${\mathcal D_i}$ is used to train the local model. 
Moreover, they read the global model from the blockchain to update the local model.

\subsection{Post-Quantum Secure Feature}\label{pqc}
It is essential to avoid using local gradients for 
subsequent global updates in a federated learning setting 
without proper verification. 
The risk of gradient attacks launched by malicious 
clients, which can easily forge information, 
is significant~\cite{nasr_comprehensive_2019}.
To mitigate the risk of gradient attacks in federated learning, 
clients often generate digital signatures for 
local gradients using their private keys. 
Edge nodes then use clients' public keys to verify the signatures, 
ensuring that the information has not been tampered with.
However,
The great majority of extant signature algorithms rely on classical cryptography, 
such as RSA. 
However, as aforementioned, 
these schemes are no longer robust to quantum computers. 
To this end, 
we use post-quantum cryptography (PQC) to ensure that 
the identities of both parties are verified and quantum security. 

\textbf{Lattice-based Cryptography.}
Cryptography based on lattices is safe because it is impossible 
to break with sufficient force in a reasonable amount of time, 
not even with quantum computers. 
In addition to multivariate cryptography, 
the leading possibilities for post-quantum cryptography 
also include hash-based cryptography and code-based encryption. 
We illustrate in \Cref{alg:pqc} how to provide post-quantum security 
features using lattice-based cryptography.

\begin{algorithm}[htbp]  
	\caption{Lattice-based Digital Signature Scheme}  
	\label{alg:pqc}  
	\begin{algorithmic}[1]  
		\Procedure{KeypairGen}{}
		\State $\mathbf{A} \leftarrow R_{q}^{k \times \ell}$
		\State $\left(\mathrm{s}_{1}, \mathrm{~s}_{2}\right) \leftarrow S_{\eta}^{\ell} \times S_{\eta}^{k}$
		\State $\mathbf{t}:=\mathbf{A} \mathbf{s}_{1}+\mathbf{s}_{2}$ \\
		\Return $ \left(p k=(\mathbf{A}, \mathbf{t}), s k=\left(\mathbf{A}, \mathbf{t}, \mathrm{s}_{1}, \mathbf{s}_{2}\right)\right) $
		\EndProcedure
		\\
		\Procedure{PQCSign}{$sk, M$}
		\State $ \mathbf{Z}:=\perp $
		\While{$ \mathrm{z}:=\perp $}
		\State $ \mathbf{y} \leftarrow S_{\gamma_{1}-1}^{\ell} $
		\State $ \mathbf{w}_{1}:=\operatorname{HighBits}\left(\mathbf{A y},2\gamma_{2}\right) $
		\State $ c \in B_{\tau}:=\mathrm{H}\left(M \| \mathbf{w}_{1}\right) $
		\State $ \mathbf{z}:=\mathbf{y}+c s_{1} $
		\If{$ \|\mathbf{z}\|_{\infty} \geq \gamma_{1}-\beta $ or $ \| \operatorname{LowBits} \left(\mathbf{A y}-\operatorname{cs}_{2},2\gamma_{2}\right) \|_{\infty} \geq \gamma_{2}-\beta $}
		$ \mathrm{z}:=\perp $
		\EndIf
		\EndWhile
		\\
		\Return $ \sigma=(\mathbf{z}, c) $
		\EndProcedure
		\\
		\Procedure{PQCVerify}{$ p k, M, \sigma=(\mathbf{z}, c) $}
		\State $ \mathbf{w}_{1}^{\prime}:=\operatorname{HighBits}\left(\mathbf{A z}-c \mathbf{t},2\gamma_{2}\right) $ \\
		\Return $ \llbracket\|\mathbf{z}\|_{\infty}<\gamma_{1}-\beta \rrbracket $ and $ \llbracket c=\mathrm{H}\left(M \| \mathbf{w}_{1}^{\prime}\right) \rrbracket $
		\EndProcedure
	\end{algorithmic} 
\end{algorithm} 

We employ a simplified version of \emph{Dilithium} as shown in \Cref{alg:pqc}, 
which is explained in the following (see \cite{ducasCRYSTALSDilithiumLatticeBasedDigital2018} for details).
\begin{inparaenum}[i)]
	\item \emph{Keypair Generation.}
	First, 
	a $ k \times \ell $ matrix $A$ is generated over 
	the ring $ R_{q}=\mathbb{Z}_{q}[X] /\left(X^{n}+1\right) $, 
	each of whose entries is a polynomial in $R_{q}$. 
	Then we randomly sample the private key vectors $s_1$ and $s_2$, 
	and therefore compute the second term of the public key as $t=As_1+s_2$.
	\item \emph{PQC Sign.} First, 
	we generates a masking vector of polynomials $y$ with coefficients less than $ \gamma_{1} $. 
	And $ \gamma_{1} $ is chosen strategically: 
	it is large enough that the signature does not expose the secret key,
	which means the signing process is zero-knowledge, 
	yet small enough that the signature cannot be easily forged. 
	Then, the clients compute $A_y$ and set $w_1$ to be 
	high-order bits of the coefficients in this vector, 
	where each coefficient $w$ of $A_y$, for example, 
	can be expressed canonically as $ w=w_{1} \cdot2\gamma_{2}+w_{0} $.
	The challenge $c$ is then created as the hash of the local gradients and $w_1$. 
	Finally, clients can get the signature by computing $z=y+cs_1$. 
	\item \emph{PQC Verify.} The edge nodes first computes $ \mathbf{w}_{1}^{\prime} $ 
	to be the high-order bits of $Az-ct$,
	and then accepts if all the coefficients of $z$ are less than $\gamma_{1}-\beta$ 
	and if $c$ is the hash of the message and $ \mathbf{w}_{1}^{\prime} $. 
	As $\mathrm{Az}-c \mathbf{t}=\mathrm{Ay}-c \mathrm{~s}_{2}$, we have 
	\begin{equation}
		\label{eq:pqcverify}
		\operatorname{HighBits}\left(\mathbf{A y},2\gamma_{2}\right)=\operatorname{HighBits}\left(\mathbf{A y}-c \mathbf{s}_{2},2\gamma_{2}\right).
	\end{equation}
	We know that a valid signature satisfies 
	$\|\operatorname{LowBits}\left(\mathbf{A y}-c \mathbf{s}_{2},2\gamma_{2}\right) \|_{\infty}<\gamma_{2}-\beta$, 
	and the coefficients of $cs_2$ are smaller than $\beta$, 
	and adding $cs_2$ is not enough to cause any carries by 
	increasing any low-order coefficient to have magnitude at least $\gamma_{2}$. 
	Thus the \Cref{eq:pqcverify} is true and the signature verifies correctly.
\end{inparaenum}
By using \Cref{alg:pqc} for the communication process between the clients and the edge nodes 
and replacing the RSA component of the blockchain, 
we can ensure that the whole system is post-quantum secure 
and can effectively withstand the threat of quantum computers. 
In our design, 
every client is allocated a unique private key according to their ID, 
and the corresponding public key is kept by the edge nodes. 
The information is verified using the client's public key, 
and the gradient information is signed with the private key 
to ensure that it is not tampered with, 
as demonstrated in \Cref{fig:rsa}.

\begin{figure}[htbp]
	\centering
	\includegraphics[width=0.8\linewidth]{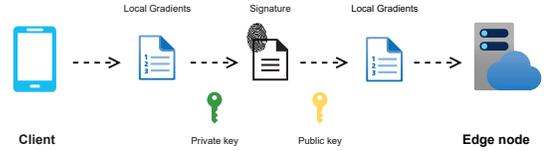}
	\caption{Miners verify transactions through PQC}
	\label{fig:rsa}
\end{figure}

\textbf{Privacy-Transparency trade-off in BFL.} 
Vanilla BFL records every local gradient in the blockchain. 
However, in the design of blockchain, 
the transaction information is public to everyone. 
In this case,
vanilla BFL is actually a white-box for the attacker, 
malicious nodes can use this information to 
perform privacy attacks and easily track the changes 
in a client's local gradient to 
launch more severe model inversion attacks~\cite{fredrikson_model_2015}. 
BFL-MEC alleviates this risk by not storing the 
original local gradients on the blockchain. 
Edge nodes, in particular, 
temporarily hold all unaggregated local gradients in their local cache pools, 
while globally aggregated local gradients are removed to free up space. 
When a new block is created, 
the edge nodes construct a transaction for each unaggregated local gradient 
that includes the sender's ID, the receiver's ID, 
and the local gradient's signature. 
After that, 
they group every transaction into a transaction set and include it in the block. 
The created global gradient and the reward list will be the first transaction in the set 
if a global update happens, in particular. 
This reduces the cost of search for the clients. 
\Cref{fig:bcrecord} depicts the structure of a block in our design.
\begin{figure}[htbp]
	\centering
	\includegraphics[width=0.56\linewidth]{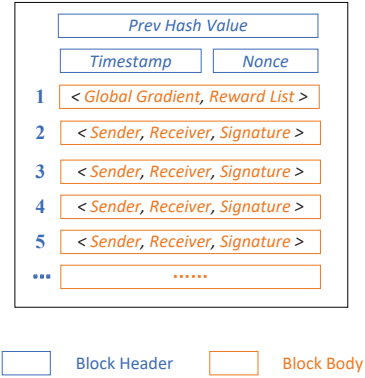}
	\caption{Privacy preserving block structure}
	\label{fig:bcrecord}
\end{figure}

\subsection{Fully Asynchronous Design}
Vanilla BFL design obeys the synchronization assumption~\cite{pokhrel_federated_2020}.
The central server samples a subset of the existing clients 
and sends instructions to these clients, which then start local updates. 
However, in an edge computing network, 
the central server cannot know precisely the current number of clients, 
nor can it guarantee that the selected clients are always in a communicable state, 
as they are constantly moving. 
In the real world, 
a fully asynchronous design is more reasonable. 
To this end, 
we design independent working strategies for clients 
and edge nodes, respectively, 
and control these two processes through the two tunable parameters $N$ and $\phi$. 

\subsubsection{Client update strategy} 
Clients keeps collecting new data from the environment and check 
if the local dataset size $\left|\mathcal{D}_{i}\right|$ exceeds the threshold $N$.
In the case of minimum constraints on the data size have been satisfied, 
the clients start executing the following work strategies.

\textbf{Anchor Jump.} 
Before beginning local training, 
clients need read the most recent global gradient from the blockchain. 
To avoid duplicate searches, 
they employ an anchor to identify the block containing the last global gradient 
and then check the first transaction of the block sequentially along this block. 
If a new global gradient is found, 
the index of the block that contains it is saved as the new anchor.

\textbf{Local model update.} 
After reading the global gradient ${w_r}$ 
from the last block of the blockchain, 
each client updates their local model accordingly. 
The data sets ${\mathcal D_i}$ are split into batches of size $B$, 
and for each epoch $i \in { 1,2,3, \ldots ,E}$, client ${C_i}$ employs 
stochastic gradient descent (SGD) to derive the gradient $w _ { r+1 } ^ { i}$, 
with the loss function $\ell$ and learning rate $\eta$, 
as outlined in \Cref{eq3}.

\begin{equation}\label{eq3}
	w^{i} \leftarrow w^{i} - 
	\eta \nabla \ell ( w^{i} ; b ) 
\end{equation}

\textbf{Uploading the gradient for mining} 
After local model update, 
the client ${C_i}$ will 
get the updated gradient $w^{i}$,
and upload it to the edge nodes for mining. 
${C_i}$ always sends its local gradients $w^{i}$ to the nearest edge node. 
However, due to constant movement, 
it may lose the connection before the upload is completed. 
In the worst case, 
it may not be able to connect with an edge node for some time, 
yet the data collected from the environment has exceeded the threshold $N$. 
In such a case, 
the client will perform local model updates normally 
while temporarily storing the not uploaded local gradients. 
When a connection can be established again, 
it computes the average of all temporarily stored local gradients as new $w^{i}$, 
Then it generates $signature$ for $w^{i}$ using the proposed PQC algorithm and finally 
sends the $w^{i}$, $signature$ and its public key $pk$ to the associated edge node. 

The whole process of client update strategy is shown in \Cref{alg:clientsupdate}.

\begin{algorithm}[htbp]  
	\caption{Clients Uptdae}  
	\label{alg:clientsupdate}  
	\begin{algorithmic}[1]  
		\If {$\left| {{\mathcal{D}_i}} \right| > N$}
		\State check the $TX_1$ of all blocks following the \textit{Anchor}
		\If {new ${w_r}$ in $block_i$}
		\State read ${w_r}$ from $block_i$ 
		\State $ \textit{Anchor} \leftarrow index(block_i) $
		\EndIf
		\State $\mathcal B \leftarrow split\ {\mathcal D_i}\ into\ batches\ of\ size\ B$ 
		\For{$each\ epoch\ i\ from\ 1\ to\ E$}
		\For{$each\ batch\ b\ \in\ \mathcal B$}
		\State $w^{i} \leftarrow w^{i } - \eta \nabla \ell ( w^ { i } ; b )$
		\EndFor
		\EndFor
		\EndIf
		\If{\{$w _ { 1 } ^ { i}, w _ { 2 } ^ { i}, ..., w _ {k} ^ { i}\}$ not uploaded}
		\State $w^{ i} \leftarrow \frac{1}{k}\sum\limits^{k} {w_{k}^i}$
		\Repeat
		\State check connection with the nearest ${S_k}$ 
		\State associate to ${S_k}$
		\State $signature \leftarrow \textbf{PQCSIGN}(w^{i}, sk)$
		\Until upload $w^i,signature,pk$ to ${S_k}$ finished
		\EndIf
	\end{algorithmic}  
\end{algorithm} 

\subsubsection{Edge nodes' update strategy} 
Before generating a block, 
the edge nodes check the distance of the latest 
one transaction $\mathcal{T}$ to the previous anchor, 
that is, 
they check how many unaggregated local gradients have been generated 
since the last global aggregation. 
If the number reaches the threshold $\phi$, 
they compute the global model and assign rewards; otherwise, 
they only record the transactions on the blockchain. 
We present here the blockchain process that must be carried out. 
As for the methods of computing the global model's and assigning rewards, 
we put them in \Cref{sec-cii,sec-convergenceproof}, respectively.

\textbf{Exchanging Gradients.}
The associated client set ${ {C_i} }$ for an edge node ${S_k}$ 
provides the updated gradient set ${ w^{i} }$. 
In parallel, 
each edge node broadcasts its own gradient set. 
${S_k}$ verifies if the received transaction already exists 
in the current gradient set ${w^{i}}$, and if not, 
it appends the new transaction. 
Eventually, all edge nodes possess the same gradient set. 
To ensure the data has not been tampered with, 
edge nodes validate the transactions from other edge nodes 
using the proposed PQC algorithm, 
as illustrated in \Cref{fig:rsa}.

\textbf{Block Mining and Consensus.}
The mining competition is a continuous process 
that involves all edge nodes. 
To be more specific, 
edge nodes continuously adjust the nonce in the block header and then 
evaluate whether the block's hash satisfies the $Target$ by utilizing SHA256. 
This entire process can be expressed as depicted in \Cref{eq5}, 
where $Targe{t_1}$ is a substantial constant that represents the highest mining difficulty. 
It's important to note that $Target$ is constant for all miners, 
and the mining difficulty is established before the algorithm commences. 
Thus, the probability of an edge node earning the right to create blocks 
is based on the speed of hash calculation.

\begin{equation}\label{eq5}
	H ( \ nonce + \ Block ) < \ Target\  = \frac{{Targe{t_1}}}{{difficulty}}
\end{equation}

Should an edge node find the solution to \Cref{eq5} before other edge nodes, 
it will promptly assemble the transaction set ${P_{i}}$ into a new block 
and disseminate this block to the network. 
Upon receipt of the new block, other edge nodes will immediately halt their 
current hash calculations and add the new block to their blockchain copies, 
subject to the block's validity being confirmed.

The whole process of edge nodes update strategy is shown in \Cref{alg:bfl}.

\begin{algorithm}[htbp]  
	\caption{Edge Nodes Update}  
	\label{alg:bfl}  
	\begin{algorithmic}[1]  
		\State do $proof\ of\ work$
		\State $W^{k}\leftarrow \{ w^{i},i = index\ of\ associate\ clients\}$
		\For {$w^i \in W^{k}$}
		\State $\textbf{PQCVERIFY}(w^i,pk)$
		\EndFor	
		\State broadcast clients updated gradient $W^{k}$
		\State received updated gradient $W^{v}$ form ${S_v}$
		\For {$w \in W^{v}$}
		\If {$w \notin W^{k}$}
		\State $W^{k}$ append $w$
		\EndIf
		\EndFor		
		\If {hash satisfies target} 
		\State $P_i\leftarrow(sender,receiver,signature)$
		\State generate and add $block(\{P_i\})$
		\State broadcast 
		\EndIf
		\If {received $bloc{k_i}$}
		\State verify $proof\ of\ work$
		\If {hash satisfies target} 
		\State stop current $proof\ of\ work$
		\State blockchain add $bloc{k_i}$
		\EndIf
		\EndIf
		\If {$\mathcal{T} - Anchor_{ts} > \phi$}
		\State $w_{g} \leftarrow \frac{1}{n}\sum\limits_{i = 1}^n {w^i},w^{i} \in {W^{k}}$ \Comment{Simple Average}
		\State $W^{k}$ append $w^{i}$
		\State $\textbf{Contribution-based\ Incentive\ Mechanism}(W^{k})$ \label{alg:cii-in-alg:bfl}
		\State $w_g \leftarrow \textbf{Fair\ Aggregation}(W^k)$ \Comment{By \Cref{eq:contriAVG}}
		\State $TX_1 \leftarrow (reward\ list, w_{g})$
		\EndIf
	\end{algorithmic}  
\end{algorithm} 

\subsection{Accounting Client's Contribution}\label{sec-cii}
After edge nodes complete the gradient exchange, 
edge node $S_k$ will have the gradient set $W^k$, 
which contains all client gradients. 
If a global update happens, 
the edge nodes first compute a temporary global gradient using simple averaging, 
and then append it to local gradient set $W^k$. 
Finally, 
the edge nodes start identifying the contributions of clients. 

\begin{algorithm}[htbp]  
	\caption{Client's Contribution Identification Algorithm}  
	\renewcommand{\algorithmicrequire}{\textbf{Input:}}
	\renewcommand{\algorithmicensure}{\textbf{Output:}}
	\label{alg:cii}  
	\begin{algorithmic}[1]  
		\Require $W^{k}$, $model\ name$, $Strategy$
		\State $Group\ List \leftarrow Clustering(model\ name, W^{k})$
		\For{$l_i \in Group\ List$} 
		\If{$w_g \in l_i$} 
		\For{$w^i \in l_i$}
		\State $\theta _{i} \leftarrow Cosine\ Distance(w^i,w_{g})$
		\State \textbf{Label} $C_i$ as high contribution
		\State \textbf{Append} $\langle C_i, \theta _i/{{\sum\limits_{k = 1}^{\lambda n} {{\theta _k}} }}*base\rangle$ to $reward\ list$
		\EndFor
		\EndIf
		\If{$w_{g} \notin l_i$}
		\ForAll{$w^i \in l_i$}
		\State \textbf{Label} $C_i$ as low contribution
		\EndFor
		\EndIf
		\EndFor
		\State $W^k \leftarrow Strategy(reward\ list, W^{k})$
		\Ensure $reward\ list$, $W^k$
	\end{algorithmic}  
\end{algorithm}

To identify contributions, 
we have implemented our method in \Cref{alg:cii}, 
which is integrated into \Cref{alg:bfl} as 
described in \Cref{alg:cii-in-alg:bfl}. 
We apply clustering algorithms to $W^k$ to 
generate various clusters of gradients, 
each cluster representing a different contribution. 
It is important to note that any clustering algorithm 
that meets the requirements can be employed here; 
nonetheless, we have chosen \emph{DBSCAN} as the default 
algorithm in our experiments due to its efficiency and simplicity. 
Clients belonging to the same cluster as the global gradient 
are regarded as having made a significant contribution 
and will be rewarded accordingly, 
whereas those who are further from the global gradient 
are considered to have made a minor contribution 
and will follow a predetermined strategy.

There are two main strategies for handling local gradients in our algorithm: 
\begin{inparaenum}[i)]
	\item keep all gradients;
	\item discard low-contributing local gradients 
	and recalculate the global updates $w_{g}$.
\end{inparaenum} 
To determine the contribution of a high-contributing client $C_i$, 
we calculate the \emph{cosine} distance $\theta_{i}$ between their local gradient 
and the global update, using this as the weight for the 
client's contribution to the global update. 
To calculate the final reward for a client, 
we multiply a $base$ reward by $\theta _i/{{\sum\limits_{k = 1}^{\lambda n} {{\theta _k}} }}$. 
We then record these rewards as key-value pairs 
$\langle C_i, \theta _i/{{\sum\limits_{k = 1}^{\lambda n} {{\theta _k}} }}*base\rangle$ 
in a $reward list$. 
When an edge node generates a new block, 
the rewards are distributed according to the reward list 
and appended to the block as transactions. 
Once blockchain consensus is achieved, 
clients receive their rewards.
The intuition behind \Cref{alg:cii} can be explained as follows: 

We explain the intuition behind \Cref{alg:cii} as follows.
\begin{itemize}
	\item \textbf{Privacy preservation.}
	In the vanilla BFL approach, 
	clients need to provide information about their 
	data dimensions to determine the rewards they will receive. 
	This incentivizes clients to cheat in order to obtain more rewards, 
	and there is no way to verify the actual data set without violating FL's guidelines. 
	In contrast, 
	the gradients can provide an intermediate representation that reflects 
	both the data size and quality. 
	By using them to perform \Cref{alg:cii}, 
	we can obtain a more objective assessment 
	that preserves the privacy of the clients.
	
	\item \textbf{Malicious attack resistance.}
	One potential threat to the security of the global model is the possibility 
	of malicious clients uploading fake local gradients. 
	However, 
	the clustering algorithm employed in \Cref{alg:cii} 
	can detect these spurious gradients since they differ 
	from the genuine ones~\cite{nasr_comprehensive_2019}. 
	By adopting the discarding strategy, 
	we can prevent these fake gradients from skewing the global model, 
	thereby preserving the security of \emph{BFL-MEC}.
\end{itemize}

We will thoroughly evaluate our contribution-based 
incentive approach in \Cref{sec-validation}.

\subsection{Fair Aggregation for Model Convergence}\label{sec-convergenceproof}
The optimization problem addressed by \emph{BFL-MEC} is 
to minimize the function $F(\mathbf{w})$, 
given by the summation of local objective functions $F_{i}(\mathbf{w})$ 
weighted by $p_i$, 
where $p_i$ is the weight of client $i$. 
This is represented by the equation 
\begin{equation*}
	\min\limits_w\left\{{F(\mathbf{w})} \triangleq \sum_{i=1}^{n} p_{i} F_{i}(\mathbf{w})\right\}. 
\end{equation*} 
For the simple average aggregation, 
where $p_{1}=p_{2}=...=p_{i}=\frac{1}{n}$, 
the global model is updated as 
\begin{equation*}\label{eq:simpleAVG}
	w_{g} \leftarrow \frac{1}{n}\sum\limits_{i = 1}^n {w^i}.
\end{equation*}

The method of simple average aggregation treats all clients' gradients 
equally and calculates their mean. 
However, 
this approach fails to account for differences 
in the sample sizes across clients, 
which can lead to unfairness. 
To address this issue, 
we use a modified approach for global gradient aggregation as follow.
\begin{equation}\label{eq:contriAVG}
	{w_{g}} \leftarrow \sum\limits_{i = 1}^n {{p_i}} w^i, \text{where}\ {p_i} = \theta _i/{{\sum\limits_{k = 1}^{\lambda n} {{\theta_k}} }}.
\end{equation}

We assign aggregation weights based on the contribution of clients 
to prevent model skew and improve accuracy, 
which addresses the issue of unequal sample sizes. 
This is more practical since each client in the mobile edge computing scenario 
has a distinct data distribution and a personalized local loss function.

The stability and convergence dynamics of \emph{BFL-MEC} can still be analyzed, 
despite using fair aggregation and an asynchronous design. 
This allows us to evaluate the performance of the algorithm 
and ensure that it achieves its intended objectives.
In our design, 
local updates and gradient uploads can be performed 
as soon as the client's local dataset size $\mathcal{D}_i$ exceeds the threshold $N$, 
which is also known as the ``activated'' state in synchronous FL. 
Therefore, 
we first define concurrency, 
that is, 
the set of clients performing local updates at each step $t$, as $C_t$. 
\begin{definition}[Concurrency]
	$ \tau_{C}^{(t)} $ is defined as the size of client set 
	for local update at step $t$, so that $ \tau_{C}^{(t)}=\left|C_{t}\right| $. 
	In consequence, we can define the maximum concurrency 
	$ \tau_{C}=\max_{t}\left\{\tau_{C}^{(t)}\right\} $
	and average concurrency 
	$ \bar{\tau}_{C}=\frac{1}{T+1} \sum_{t=0}^{T} \tau_{C}^{(t)} $ as well.
\end{definition}

Also, we define the average delay as follow.

\begin{definition}[Average Delay]
	The average delay of a client $i$ is 
	\begin{equation}
		\tau_{a v g}^{i}=\frac{1}{T_{i}}\left(\sum_{t: j_{t}=i} \tau_{t}+\sum_{k} \tau_{k}^{C_{T}, i}\right), 
	\end{equation}
	where $T_i$ is the number of times client $i$ performs local updating 
	and $\left\{\tau_{k}^{C_{T}, i}\right\}_{k}$ is the set of delays 
	from gradients of the client $i$ that are left unapplied at the last iteration.
\end{definition}

Obviously, 
the convergence of \emph{BFL-MEC} is highly relevant to $ \tau_{C}^{(t)} $. 
Moreover, 
from \cite{koloskovaSharperConvergenceGuarantees2022}, 
we can learn an essential relationship between average 
concurrency $\bar{\tau}_{C}$ and average delay $\tau_{a v g}$, 
that is, 
$ \tau_{a v g}=\frac{T+1}{T+\left|C_{T}\right|-1} \bar{\tau}_{C} \stackrel{T>\left|C_{T}\right|}{=} \mathcal{O}\left(\bar{\tau}_{C}\right) $. 
In this work, we also employ this observation to help our proof. 
For the sake of tractability, 
we have adopted the following four widely used assumptions 
from the literature~\cite{stich_local_2019,li_convergence_2020,li_federated_2020,dinhFederatedLearningWireless2021}.

\begin{assumption}[$L$-smooth~\cite{stich_local_2019,li_federated_2020,dinhFederatedLearningWireless2021}]\label{ass-smooth}
	Consider 
	${F_i}(w) \triangleq \frac{1}{n}\sum\limits_{i = 1}^n \ell  
	\left( {w;{b_i}} \right)$ 
	and ${F_i}$ is L-smooth, 
	then for all $\mathbf{v}$ and $ \mathbf{w}$,
	\[ F_{i}(\mathbf{v}) \leq F_{i}(\mathbf{w})+(\mathbf{v}- \mathbf{w})^{T} 
	\nabla F_{i}(\mathbf{w})+\frac{L}{2}\|\mathbf{v}-\mathbf{w}\|_{2}^{2}. \]
\end{assumption}

\begin{assumption}[$\mu$-strongly~\cite{stich_local_2019,li_federated_2020,dinhFederatedLearningWireless2021}]\label{ass-ustrongly}
	${F_i}$ is u-strongly convex, 
	for all $ \mathbf{v} $ and $ \mathbf{w}$,
	\[F_{i}(\mathbf{v}) \geq F_{i}(\mathbf{w})+(\mathbf{v}- \mathbf{w})^{T} 
	\nabla F_{i}(\mathbf{w})+\frac{\mu}{2}\|\mathbf{v}-\mathbf{w}\|_{2}^{2}.\]
\end{assumption}

\begin{assumption}[bounded variance~\cite{li_convergence_2020}]\label{ass-gradientbound}
	The variance of stochastic gradients in each client is bounded by: 
	\[ \mathbb{E}\left\|\nabla F_{i}
	\left(\mathbf{w}_{r}^{i}, b_i\right)-\nabla F_{i}
	\left(\mathbf{w}_{r}^{i}\right)
	\right\|^{2} \leq \sigma_{i}^{2} \]
\end{assumption}

\begin{assumption}[bounded stochastic gradient~\cite{li_convergence_2020}]\label{ass-gbound}
	The expected squared norm of stochastic gradients 
	is uniformly bounded, 
	thus for all $ i=1, \cdots, n $ and $ r=1, \cdots, r-1$, 
	we have 
	\[\mathbb{E}\left\|\nabla F_{i}
	\left(\mathbf{w}_{t}^{i}, b_i\right)
	\right\|^{2} \leq G^{2}\]
\end{assumption}

\Cref{ass-smooth,ass-ustrongly,ass-gradientbound,ass-gbound} 
are crucial for analyzing the convergence of \emph{BFL-MEC}, 
as established in several prior works. 
These assumptions all impose constraints on the underlying loss function, 
specifying that it must not vary too quickly (\Cref{ass-smooth}) 
or too slowly (\Cref{ass-ustrongly}), 
while also bounding the magnitude of the gradients (\Cref{ass-gradientbound,ass-gbound}). 
By leveraging these assumptions and approximating $F-F^*$ with $w-w^*$, 
we can provide a convergence guarantee, 
as formalized in \Cref{THEOREM1}.

\begin{theorem}\label{THEOREM1}
	Given \Cref{ass-smooth,ass-ustrongly,ass-gradientbound,ass-gbound} hold, 
	and $\tau_{a v g}=\frac{T+1}{T+\left|C_{T}\right|-1} \bar{\tau}_{C} \stackrel{T>\left|C_{T}\right|}{=} \mathcal{O}\left(\bar{\tau}_{C}\right)$,
	we have  
	\begin{equation}\label{eq-theorem1}
		\frac{1}{T+1} \sum_{t=0}^{T}\left\|\nabla f\left(\mathrm{x}^{(t)}\right)\right\|_{2}^{2} \leq \varepsilon 
	\end{equation}
	after  $\mathcal{O}\left(\frac{\sigma^{2}}{\varepsilon^{2}}+\frac{\zeta^{2}}{\varepsilon^{2}}+\frac{\tau_{a v g} G}{\varepsilon^{\frac{3}{2}}}+\frac{\tau_{\text {avg }}}{\varepsilon}\right)$ 
	global aggregations.
\end{theorem}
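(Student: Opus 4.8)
The plan is to follow the asynchronous SGD analysis template of \cite{koloskovaSharperConvergenceGuarantees2022}, adapting it to the fair, contribution-based weights $p_i$ of \Cref{eq:contriAVG} and to the concurrency/delay correspondence stated in the hypothesis. The starting point is the descent lemma obtained from $L$-smoothness (\Cref{ass-smooth}): applying the smoothness inequality to consecutive global iterates $x^{(t)}$ and $x^{(t+1)}$ gives
\begin{equation*}
\mathbb{E}\,f(x^{(t+1)}) \leq f(x^{(t)}) - \eta\,\langle \nabla f(x^{(t)}),\, \mathbb{E}\,g^{(t)}\rangle + \frac{L\eta^{2}}{2}\,\mathbb{E}\,\norm{g^{(t)}}^{2},
\end{equation*}
where $g^{(t)}$ is the aggregated stochastic update actually applied at step $t$. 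Because the framework is asynchronous, the gradients composing $g^{(t)}$ are evaluated at stale iterates $x^{(t-\tau_t)}$ rather than at $x^{(t)}$, so the inner product does not immediately produce the clean descent term $-\eta\,\norm{\nabla f(x^{(t)})}^{2}$.

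First I would isolate the staleness error by writing $\nabla f_i(x^{(t-\tau_t)}) = \nabla f(x^{(t)}) + \big(\nabla f_i(x^{(t-\tau_t)}) - \nabla f(x^{(t)})\big)$ and using $L$-smoothness once more to bound the discrepancy by $L\,\norm{x^{(t-\tau_t)} - x^{(t)}}$. The drift $\norm{x^{(t-\tau_t)} - x^{(t)}}$ telescopes into a sum of at most $\tau_t$ consecutive update steps, each controlled by the bounded-stochastic-gradient assumption (\Cref{ass-gbound}), yielding a drift bound of order $\eta\,\tau_t\,G$. Averaging over $t$ and invoking the identity $\tau_{avg}=\tfrac{T+1}{T+|C_T|-1}\bar{\tau}_C=\mathcal{O}(\bar{\tau}_C)$ converts the per-step delays $\tau_t$ into the single quantity $\tau_{avg}$; this is precisely the step where the \emph{average} delay, rather than the worst-case delay, enters and produces the sharper $\tau_{avg}G/\varepsilon^{3/2}$ and $\tau_{avg}/\varepsilon$ terms. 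The bounded-variance assumption (\Cref{ass-gradientbound}) contributes the $\sigma^{2}$ term through the noise part of $\mathbb{E}\,\norm{g^{(t)}}^{2}$, while the gradient dissimilarity across the weighted local objectives $F_i$ contributes the $\zeta^{2}$ term.

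Next I would telescope the descent inequality over $t=0,\dots,T$. After moving the $-\eta\,\norm{\nabla f(x^{(t)})}^{2}$ terms to the left, summing, and dividing by $(T+1)\eta$, the right-hand side becomes $\tfrac{f(x^{(0)}) - f^{\star}}{(T+1)\eta}$ plus the noise, heterogeneity and staleness contributions accumulated above; here I use the strong convexity of \Cref{ass-ustrongly} together with the approximation $F-F^{\star}\approx w-w^{\star}$ noted in the text to keep $f(x^{(0)})-f^{\star}$ finite and to relate the stationarity measure to the suboptimality gap. The final step is to tune the learning rate $\eta$ as a function of $T$, $L$, $G$, $\sigma$, $\zeta$ and $\tau_{avg}$ so as to balance the four right-hand-side terms, set the resulting bound equal to $\varepsilon$, and invert the relation $T\mapsto\varepsilon$; solving for the number of global aggregations $T$ reproduces $\mathcal{O}\!\big(\sigma^{2}/\varepsilon^{2}+\zeta^{2}/\varepsilon^{2}+\tau_{avg}G/\varepsilon^{3/2}+\tau_{avg}/\varepsilon\big)$.

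The main obstacle will be the rigorous control of the drift term under the fair, contribution-based weights $p_i$ rather than the uniform $1/n$ weights of the reference: the staleness decomposition must be carried out with the weighted aggregate, and one must verify that the concurrency/delay correspondence remains valid when clients are weighted unequally and activated at data-dependent times governed by the threshold $N$. A secondary difficulty is reconciling the non-convex stationarity measure $\tfrac{1}{T+1}\sum_{t=0}^{T}\norm{\nabla f(x^{(t)})}^{2}$ with the strong-convexity hypothesis, which requires the $F-F^{\star}\approx w-w^{\star}$ approximation to be made precise enough that the constants absorbed into the $\mathcal{O}(\cdot)$ do not secretly reintroduce a dependence on $\varepsilon$.
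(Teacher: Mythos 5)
Your proposal is sound in overall strategy and lands on the same rate, but it follows a genuinely different route from the paper. The paper uses the perturbed-iterate technique of Mania et al.: it introduces a virtual sequence $\tilde{\mathrm{x}}^{(t+1)}=\tilde{\mathrm{x}}^{(t)}-\eta\nabla F_{k_t}(\mathrm{x}^{(t)},b_{t+\hat{\tau}_t})$, proves a descent lemma for $f(\tilde{\mathrm{x}}^{(t)})$ whose only asynchrony penalty is the real-versus-virtual distance $\|\mathrm{x}^{(t)}-\tilde{\mathrm{x}}^{(t)}\|_2^2$ (\Cref{LEMMA1}), bounds that distance by $\eta^{2}\tau_{C}\sigma^{2}+\eta^{2}\tau_{C}^{2}G^{2}$ using the \emph{concurrency} $\tau_C$, i.e.\ the number of in-flight gradients, rather than per-step delays (\Cref{LEMMA2}), and only then converts $\bar{\tau}_C$ to $\tau_{avg}$ via the stated identity before tuning the stepsize as in Koloskova et al. You instead apply smoothness directly to the actual iterates and control the stale-gradient discrepancy by telescoping the drift $\|\mathrm{x}^{(t-\tau_t)}-\mathrm{x}^{(t)}\|$ over the delay window. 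This is the classical delayed-SGD decomposition and it does work, but it naturally produces $\frac{1}{T+1}\sum_t \tau_t^{2}$, the \emph{second} moment of the delays, in the dominant drift term; Jensen's inequality goes the wrong way to replace this by $\tau_{avg}^{2}$, so "averaging over $t$" alone does not deliver the average-delay dependence you claim. The concurrency-based bookkeeping is precisely what lets the paper (following Koloskova et al.) avoid this, so if you pursue your route you must either accept a bound in terms of the delay second moment (or $\tau_{\max}$) or import the concurrency argument anyway. On your two stated obstacles: both are legitimate worries, but the paper sidesteps rather than resolves them --- its virtual sequence applies one unweighted stale stochastic gradient per step, so the contribution weights $p_i$ never enter the analysis, and \Cref{ass-ustrongly} is never actually invoked in the proof, which is a purely non-convex stationarity bound. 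Your proposal is therefore more honest about where the gaps lie, while the paper's proof is the cleaner (if less complete) template.
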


According to \Cref{eq-theorem1}, 
the convergence of \emph{BFL-MEC} is guaranteed by the fact that 
the distance between the actual model $F$ and the optimal model $F^*$ 
decreases with an increasing number of global aggregations. 
Unlike other methods, 
we do not assume that the data is identically and independently distributed (IID), 
making our method robust against variations in the data distribution. 

The proof of \Cref{THEOREM1} is presented in \Cref{proof-theorem1}, 
and we further support it with experimental results in \Cref{sec-validation}.

\section{Approximate Performance of \emph{BFL-MEC}}\label{sec-summary}
In this section, 
we look at each step in BFL-MEC to analyze the performance. 
However, analyzing the overall system is difficult due to the asynchronous design, 
but we can analyze the approximate performance from the client side and edge node side separately, 
as shown in the figure below.

\begin{figure}[htbp]
	\centering
	\includegraphics[width=\linewidth]{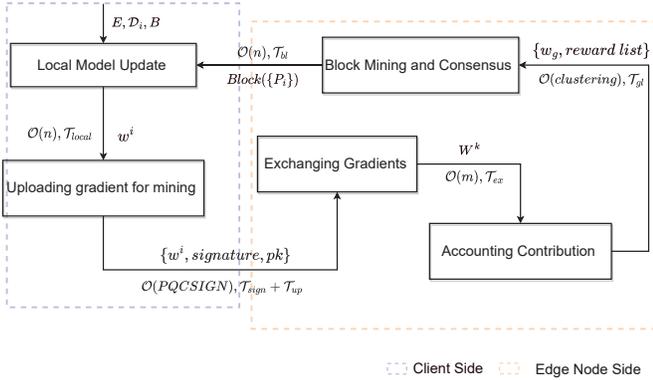}
	\caption{Approximate Performance of \emph{BFL-MEC}}
	\label{fig:delay}
\end{figure}

\subsection{Analysis of the client side}
\textbf{Local Model Update.} 
The time complexity of calculating \Cref{eq3} is $\mathcal O(E*\frac{{\mathcal D}_i}{B})$, 
as it can be calculated $\frac{{\mathcal D}i}{B}$ times 
with a batch size of $B$. 
The delay $\mathcal T_{local}$ is defined as the calculation time of this step. 
However, 
it's important to note that $E$ and $B$ are 
typically set as small constants for all clients, 
so the time complexity of \Cref{eq3} is linear and can be expressed 
as $\mathcal O(n)$ under normal circumstances.

\textbf{Uploading the gradient for mining.} 
The only computation here is the generation of the digital signature, 
so the complexity is that of the proposed procedure $PQCSIGN$, 
which we write as $\mathcal{O}(PQCSIGN)$. 
The time required to generate a digital signature for the 
same local gradient differs significantly between schemes, 
resulting in an essential source of performance discrepancies. 
The time required to generate a digital signature 
is denoted as the latency $\mathcal{T}_{sign}$. 
\Cref{sec-general} compares the latency of 
signing for various schemes in detail.
In addition, 
clients are frequently on the move, 
and ensuring the quality of the channel can be challenging. 
Moreover, 
external interferences may cause additional delays. 
Given these factors, 
we consider communication time as the primary source of delay in this phase, 
which we denote as $\mathcal{T}_{up}$. 
Consequently, 
the overall latency of this stage can be expressed as the sum of the time required for signing, 
$\mathcal{T}_{sign}$, 
and the communication time, $\mathcal{T}_{up}$, 
i.e., $\mathcal{T}_{sign}+\mathcal{T}_{up}$. 

In summary, 
the overall complexity on the client side is close to $\mathcal{O}(n)$ 
and the overall latency is $\mathcal{T}_{local}+\mathcal{T}_{sign}+\mathcal{T}_{up}$. 
Where $\mathcal{T}_{local}$ is related to the computational capability of the client itself, 
and $\mathcal{T}_{up}$ is related to the connection speed. 
Therefore $\mathcal{T}_{sign}$ becomes the most important piece of the chain, 
and in mobile edge computing scenarios, 
we have minimize $\mathcal{T}_{sign}$ while ensuring post-quantum security.

\subsection{Analysis of the edge nodes side}
\textbf{Exchanging Gradients.} 
This step involves basic communication and data exchange between edge nodes, 
and the time complexity is linear $\mathcal O(m)$. 
The edge nodes perform three main tasks: 
\begin{inparaenum}[i)]
	\item broadcasting their local gradient sets, 
	\item receiving the gradient sets from other edge nodes,
	\item adding the local gradients they do not own. 
\end{inparaenum}
The time required for all edge nodes to have the same gradient set 
is denoted as $\mathcal{T}_{ex}$, 
which is generally insignificant as long as the communication 
among the edge nodes is good, 
as required for practical applications.

\textbf{Accounting Client's Contribution.} 
The time complexity of this step depends on the clustering algorithm 
used in \Cref{alg:cii} and can be represented as $\mathcal O(clustering)$. 
During this step, 
edge nodes only need to execute the algorithm, 
so the time required depends on the efficiency of the clustering approach. 
We denote the total delay of this step as $\mathcal T_{gl}$.

\textbf{Fair Aggregation.}
The edge nodes use \Cref{eq:contriAVG} to compute the final global gradient 
with safely negligible complexity and latency. 

\textbf{Block Mining and Consensus.} 
The calculation of the hash value in accordance with \Cref{eq5} 
involves multiple iterations until the $Target$ value is reached, 
and the time required for each iteration is 
proportional to the size of the block. 
Therefore, 
the time complexity of this step can be represented as $\mathcal O(n)$, 
and we refer to the total time required for this calculation as $\mathcal T_{bl}$. 
This delay can be significant compared to other steps in the overall process.

Based on the above discussion, 
the overall complexity on the edge node side is likewise close to $\mathcal{O}(n)$, 
while the overall latency is $\mathcal{T}_{ex}+\mathcal{T}_{gl}+\mathcal{T}_{bl}$. 
Where $\mathcal{T}_{ex}$ can be reduced to a minimum level by 
good connectivity between edge nodes (e.g., wired Ethernet). 
$\mathcal{T}_{gl}$ is related to the clustering algorithm employed, 
which is a trade-off between latency and accuracy 
that should be taken as optimal according to the demand. 
$\mathcal{T}_{bl}$ is derived from a typical blockchain component, 
thus, an optimal point can be found(e.g., applying the techniques similar to that of~\cite{pokhrel_federated_2020}).

\section{Evaluation and Discussion}\label{sec-validation}
This section details the comprehensive experiments performed 
to evaluate the performance of \emph{BFL-MEC} on the real dataset. 
We varied the parameters to observe the changes in performance and delay 
under various conditions. 
Additionally, we present some novel insights, 
such as security and cost-effectiveness.

In our experiments, with important insights from~\cite{pokhrel_federated_2020, Xu2023},
we compared the performance of \emph{BFL-MEC} with three baseline methods, 
including the Blockchain, \emph{FedAvg}~\cite{mcmahanCommunicationEfficientLearningDeep2017}, 
and \emph{FedProx}~\cite{li_federated_2020}, 
on the \emph{MNIST}~\cite{lecun_gradient-based_1998} benchmark dataset. 
We evaluated the performance using the average accuracy metric, 
which is computed as $\sum\limits_{i = 1}^n {ac{c_i}/n}$, 
where $acc_i$ is the verification accuracy of client $C_i$. 
We assumed non-IID data distribution and set the default parameters 
as $n=100$, $m=2$, $\eta=0.01$, $E=5$, $B=10$, and $base=100$.

\subsection{Performance Impact}\label{sec-delay}
In our experiments, 
we set a convergence criteria for the model. 
Specifically, we define the model as converged when the change in accuracy is within $0.5\%$ 
for $5$ global aggregations. 
Moreover, 
we track the global aggregations $100$ times by default.

\subsubsection{General analysis of latency and performance}\label{sec-general}

With the fully asynchronous design, 
the edge nodes and clients are fully autonomous according to the defined policies 
so that the system latency depends mainly on 
data arrival, cryptographic processes, and data transmission. 
Here, 
we assume that the data transmission is sound, 
which is one of the advantages of edge computing. 
Also, 
to simplify the problem, data arrives continuously on each client. 
Therefore, 
only the additional delay due to the cryptographic process needs to be compared. 
The baseline is the vanilla scheme RSA (adopted by the mainstream blockchain) 
and the PQC candidates FALCON and Rainbow from NIST. 
The results are shown in \Cref{tab:delay}.

\begin{table}[htbp]
	\centering
	\caption{Delay Comparison}
	\resizebox{\linewidth}{!}{
		\begin{tabular}{llccc}
			\toprule
			\multicolumn{2}{c}{\textbf{Schemes}} & \multicolumn{1}{p{3.375em}}{\textbf{Keygen\newline{}(ms)}} & \multicolumn{1}{p{2.19em}}{\textbf{Sign\newline{}(ms)}} & \multicolumn{1}{p{2.69em}}{\textbf{Verify\newline{}(ms)}} \\
			\midrule
			Vanilla & RSA (PKCS1 v1.5) & 123   & 248   & 106 \\
			\midrule
			\multicolumn{1}{l}{\multirow{5}[9]{*}{Post-Quantum\ Safe}} & FALCON 512 & 30    & 108   & 93 \\
			\cmidrule{2-5} & FALCON 1024 & 46    & 109   & 108 \\
			\cmidrule{2-5} & Rainbow 5 Classic & 4300  & 140   & 154 \\
			\cmidrule{2-5} & Rainbow 5 Cyclic & 4620  & 120   & 219 \\
			\cmidrule{2-5} & Ours  & \textcolor[rgb]{ .929,  .49,  .192}{15*} & \textcolor[rgb]{ .929,  .49,  .192}{85*} & \textcolor[rgb]{ .929,  .49,  .192}{83*} \\
			\bottomrule
		\end{tabular}%
	}
	\label{tab:delay}%
\end{table}%

It can be seen that RSA, 
as a widely used vanilla solution, 
can sign and verify digital signatures at a good speed, 
but it takes more time to sign the local gradient. 
In post-quantum safe schemes, 
compared with RSA,
Falcon has a similar signature verification speed 
while generating a key pair and digital signature faster. 
Rainbow is only faster than RSA in signing, 
and it takes much time to initialize the key pair. 
Because Rainbow pursues the best signature size. 
BFL-MEC is leading in all comparisons. 
For generating key pairs, 
BFL-MEC is \textbf{8x} faster than RSA and \textbf{2x} faster than Falcon. 
For the signature local gradient, 
our method is \textbf{3x} faster than RSA, 
\textbf{1.5x} faster than Falcon and Rainbow. 
For verification signatures, 
BFL-MEC also outperforms all baselines. 
In conclusion, 
with the well-designed cryptographic approach, 
BFL-MEC can achieve lower latency while guaranteeing post-quantum safety. 
Hence, 
BFL-MEC is more competitive for latency-sensitive computing scenarios 
such as mobile edge computing.

\subsubsection{Impact of Thresholds}

BFL-MEC follows a fully asynchronous and autonomous design, 
where the clients and edge nodes perform the computation independently. 
In this context, 
$N$ defines when clients perform model updates and upload local gradients, 
whereas $\phi$ defines when edge nodes compute the federated model. 
These two parameters jointly shape the whole learning process. 
Thus, 
it is necessary to explore their impacts. 
Here, 
we set $N \in [50,75,100]$ and $\phi \in [5,10,15,20]$, 
and then explore the impacts of various combinations of $(\phi,N)$ on the average accuracy.
The results are shown as \Cref{fig:influenceofpara,fig:convergancetime}.

\begin{figure}[htbp]
	\centering
	\includegraphics[width=\linewidth]{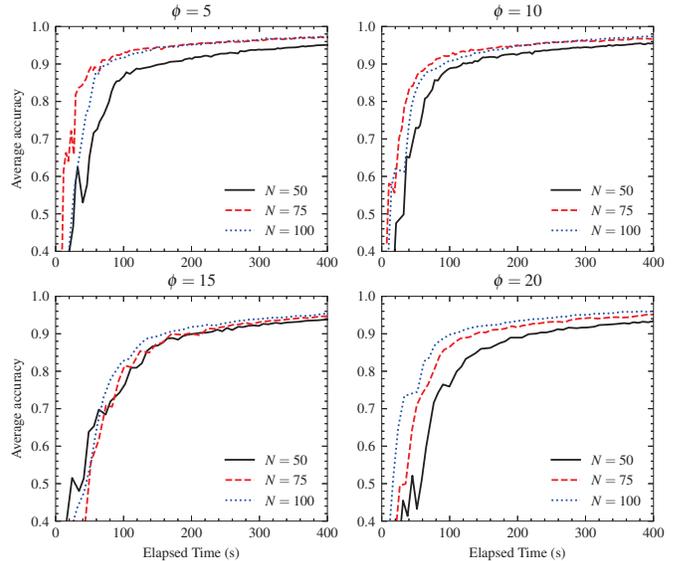}
	\caption{Accuracy of different combination}
	\label{fig:influenceofpara}
\end{figure}

We can see that when $\phi$ is set to large thresholds (e.g., $\phi=15$ and $\phi=10$), 
the accuracy of the federated model increases as $N$ increases, 
and this trend becomes more pronounced the larger $\phi$ is, 
such as when $\phi=20$. 
This is due to the fact that more data are used to update the local model, 
and the federated model is computed using more local gradients, 
amplifying such a data advantage. 
However, 
it is worth noting that the convergence of the federated model 
can be slower if $\phi$ is set to a significant value, 
as when comparing $\phi=15$ with $\phi=5$. 
The reason could be that in the case of continuous arrivals, 
the clients have to take more time to collect new data from the environment.
Meanwhile, 
the slower federated model aggregation hinders local models 
from learning global knowledge in time.

\begin{figure}[htbp]
	\centering
	\includegraphics[width=0.8\linewidth]{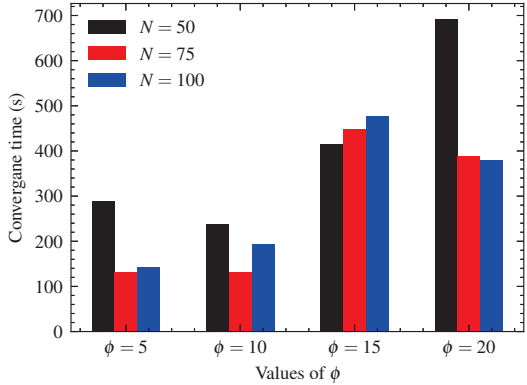}
	\caption{Convergence time of different combination}
	\label{fig:convergancetime}
\end{figure}

Interestingly, 
when $\phi$ is set to small thresholds, 
such as $\phi=5$ and $\phi=10$, 
a larger $N$ does not necessarily result in higher accuracy. 
It can be seen that in this case, 
$N=75$ and $N=100$ eventually have almost the same accuracy of the federated model, 
yet the case of $N=75$ has higher accuracy and faster convergence speed in the early learning process. 
This is caused by the fact that 
a larger $N$ reduces the rate of local gradient generation 
while the federated model is iterated quickly. 
However, 
it is important to point out that setting $N$ to a minimal value (e.g., $N=50$), 
is also considered inappropriate. 
In this case, 
the federated model has the lowest accuracy and the slowest convergence speed. 
The intuition behind that is a small $N$, 
although it enables more local gradients to be uploaded simultaneously, 
impairs the learning effect due to the small training data size used by the client. 
That is, 
the negative effect of insufficient data suppresses the 
advantage of the fast distribution of global knowledge.

In conclusion, 
it is clear that there is a trade-off between $\phi$ and $N$ and 
thus the optimal combination $(\phi,N)$ can be found. 
In the above experiment, 
the optimal combination is $(5,75)$ (see \Cref{fig:influenceofpara,fig:convergancetime}). 
To this end, we have the following insight.

\textbf{Insight 1:} 
Small $\phi$ leads to faster global knowledge transfer between edge nodes and clients, 
which helps to improve the performance of learning process. 
$N$ will affect the number of local gradients generated over the same time, 
but small $N$ will impair accuracy. 
Thus there is a trade-off. 
We can find the optimal combination to obtain the best accuracy and convergence speed.

\subsection{Cost-effectiveness}\label{sec-economy}
In this analysis, 
we investigate the impact of employing \Cref{alg:cii} 
with the discarding strategy on the accuracy 
and convergence rate of \emph{BFL-MEC}.
We use \emph{DBSCAN} as a sample, 
which is a density-based clustering method with the following advantages: 
\begin{inparaenum}[i)]
	\item It can find clusters of any shape without requiring the data set to be convex.
	\item Outliers will not significantly affect the clustering results but be discovered.
	\item It does not rely on hyperparameters such as initial values, 
	thus it can directly find the distance difference 
	between each local update and the global gradient through the clustering results.
\end{inparaenum}
We would like to emphasize that any appropriate clustering algorithm could be applied here, 
not only the one used in our implementation. 
It is worth noting that \emph{FedProx} also excludes clients to 
enhance both the convergence speed and model accuracy.
However, 
while \emph{FedProx} removes stragglers to prevent global model skewness, 
our approach excludes low-contributing clients based on the results of the clustering algorithm. 
To demonstrate the effectiveness of our contribution-based incentive mechanism, 
we consider the \emph{FedProx} with a $drop\ percent$ of $0.02$ 
as a new baseline and compare its performance to \emph{BFL-MEC}.

\begin{figure}[htbp]
	\centering
	\begin{subfigure}{0.48\linewidth}
		\includegraphics[width=\linewidth]{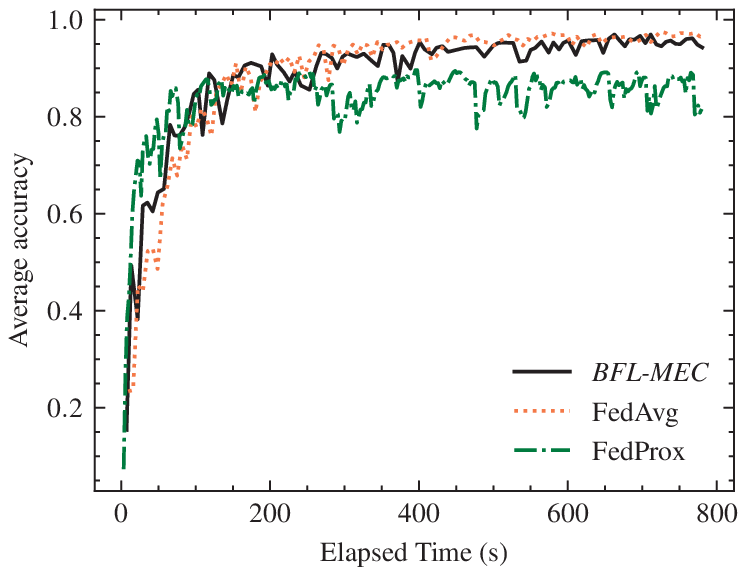}
		\caption{Without discarding}
		\label{fig:general-acc}
	\end{subfigure}
	\begin{subfigure}{0.48\linewidth}
		\includegraphics[width=\linewidth]{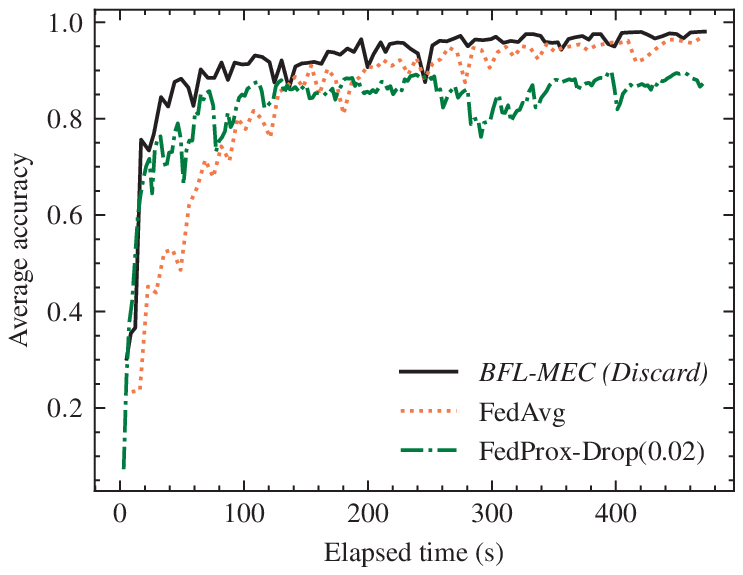}
		\caption{Discarding strategy}
		\label{fig:fast-acc}
	\end{subfigure}
	\caption{\emph{BFL-MEC} is faster without reducing accuracy}
	\label{fig:fast}
\end{figure}

Then in \Cref{fig:fast}, 
we can see the effect of the discarding strategy 
on the accuracy of \emph{BFL-MEC}.
From \Cref{fig:general-acc}, 
we can see that \emph{BFL-MEC} achieves a model performance 
almost equivalent to that of \emph{FedAvg}. 
However, \emph{FedProx} falls behind in terms of accuracy, 
and its accuracy continues to fluctuate even after convergence, 
which is due to its use of an inexact solution to accelerate convergence. 
In contrast, \emph{BFL-MEC} leverages the discard strategy to 
ensure higher accuracy and faster convergence of the model, 
as demonstrated in \Cref{fig:fast-acc} 
where the \emph{BFL-MEC} line consistently outperforms 
the \emph{FedAvg} and original \emph{BFL-MEC} lines, 
reaching the convergence point between $250$ and $300$ seconds. 
While \emph{FedProx} initially converges better than the other methods, 
its accuracy plateaus around $84\%$, which is lower than the others.

The superior performance of \emph{BFL-MEC} can be attributed to the fact 
that low-contributing clients are excluded from global aggregation, 
thereby reducing noise from low-quality data and effectively 
preventing the global model from getting stuck in local optima. 
By removing such clients, 
\emph{BFL-MEC} improves the accuracy and overall performance of the model. 
Therefore, 
as discussed above and shown in \Cref{fig:fast}, 
the discard strategy significantly improves accuracy and reduces 
the time required to reach convergence. 
Our results have shown that \emph{BFL-MEC} is faster 
and more efficient than other FL methods.

\textbf{Insight 2:} 
The implementation of the discarding strategy can lead to faster model convergence 
and higher accuracy in large-scale scenarios.

\subsection{Security by Design}\label{sec-security}
In the previous section, 
we demonstrated that using the discarding strategy in \Cref{alg:cii} 
can effectively improve the performance of \emph{BFL-MEC}. 
In this section, we evaluate the security of the proposed 
algorithm by examining its robustness against malicious clients. 
We simulate an attack scenario where some clients intentionally 
modify their local gradients to skew the global model.
Specifically, 
during each upload of the local gradient, 
the malicious clients significantly increase or 
decrease the value of the actual gradient in some direction, 
expecting to perform membership inference attack by 
analyzing the resulting change in the federated model. 
It is a typical privacy attack technique proposed in the study of \cite{nasr_comprehensive_2019}. 
We also use \emph{DBSCAN} as an example to identify 
variations in the contributions of different clients.

In experiments,
There are $10$ indexed clients $C_i\in[1,2,...,9,10]$, 
and we observe the time elapsed for the federated model to aggregate ten times.
We consider the following two attack cases:
\begin{inparaenum}[i)]
	\item crafty attackers use backdoors or Trojans to control clients 
	to perform the membership inference attacks described above 
	and are good at disguising themselves. 
	That means they may constantly be changing the client used.
	\item The attackers are the clients themselves, 
	who are curious about the data of other clients and therefore 
	perform the same privacy attacks, 
	but at the same time, 
	as participants in the system, 
	they also want to be rewarded normally.
\end{inparaenum}
To this end,
For the first case, 
we randomly designate $3$ clients as malicious clients and 
are set to become honest once a malicious client is detected. 
Meanwhile, 
another client will become malicious. 
But the total number of malicious clients is constant.
Then, we observe the detection rate of malicious clients.
As for another case,
we specify that clients $C_i\in[6,8,9]$ is curious about 
the data of other participants and this situation will not change. 
Then, 
we observe the cumulative rewards obtained by all clients 
during federated model aggregation.
The results are as shown in \Cref{fig:safe}.

\begin{figure}[htbp]
	\centering
	\begin{subfigure}{0.48\linewidth}
		\includegraphics[width=\linewidth]{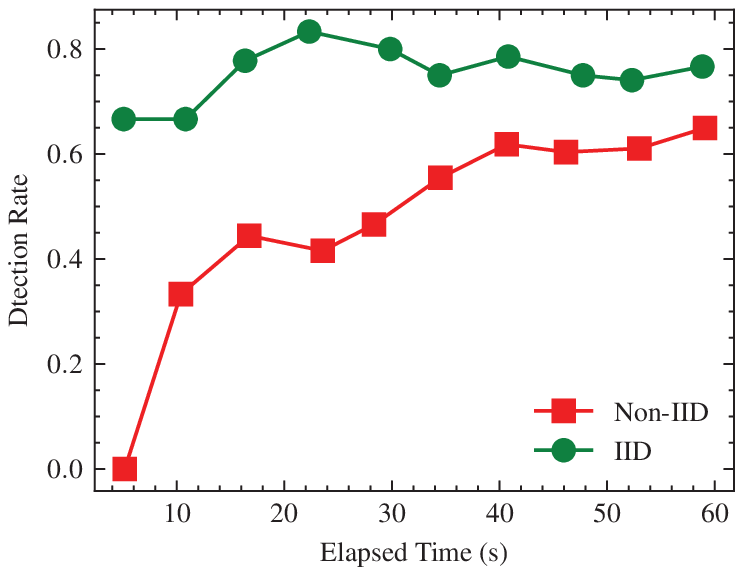}
		\caption{Attack detection rates}
		\label{fig:safe-attack}
	\end{subfigure}
	\begin{subfigure}{0.48\linewidth}
		\includegraphics[width=\linewidth]{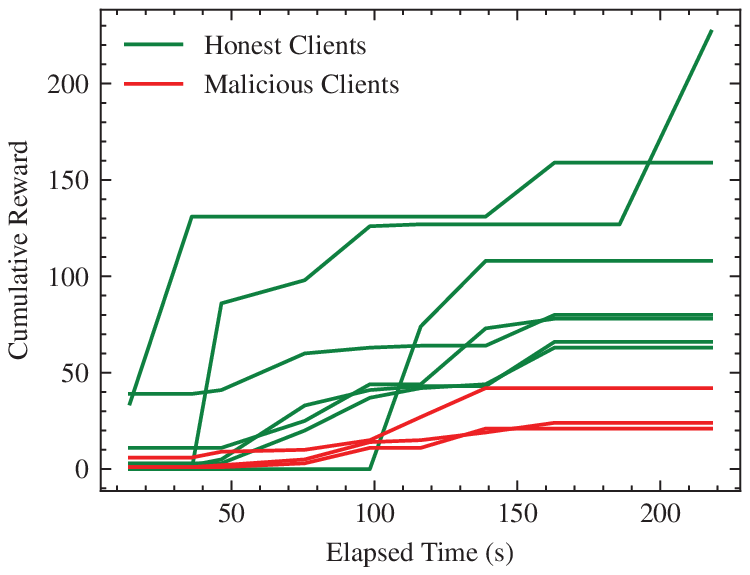}
		\caption{Clients' cumulative reward}
		\label{fig:safe-reward}
	\end{subfigure}
	\caption{\emph{BFL-MEC} is faster without reducing accuracy}
	\label{fig:safe}
\end{figure}

From \Cref{fig:safe-attack}, 
we can see that when the data distribution of clients satisfies IID, 
the detection rate of malicious attacks is at a high level from 
the very beginning (about 67\%) and eventually stabilizes at close to 80\%.
This implies that, given that the vast majority of clients are honest, 
the impact of malicious clients can be clearly detected 
as their modified local gradients are significantly 
different from the normal ones.
In the case of IID, 
such differences are more easily identified.
On the contrary, 
if the data distribution of clients is heterogeneous, 
the attacker's forged local gradients may successfully escape detection 
by the mechanism at the beginning of the learning process.
Nevertheless, 
with a few global aggregations, 
the detection rate will boost quickly and 
eventually reach the optimal level (about 66\%). 
To put it differently, 
the detection rate of malicious nodes is observed 
to increase with the convergence of the model. 
This is because, 
as the model converges, 
the local gradients become more similar to each other, 
and the modified gradients used by malicious clients are more distant from the normal ones. 
Thus, 
the clustering algorithm can more effectively detect the anomalies. 
In the case of IID, 
the distribution of data is good, 
and normal gradients are more spatially concentrated, 
making it easier to detect anomalies from the beginning. 
However, 
in the case of non-IID, 
the detection rate can also be achieved gradually as the model 
converges. 
These results indicate that \emph{BFL-MEC} can resist malicious attacks effectively, 
whether in the case of IID or non-IID data distributions.

From \Cref{fig:safe-reward}, 
clients identified by the discard strategy as low contributing will not be rewarded, 
so that the cumulative reward curve will appear as a straight line. 
We can see that many local gradients of honest clients are also discarded at the beginning, 
and only a few honest clients are rewarded. 
However, 
as long as they remain honest, 
all of them will be rewarded after a few global aggregations and 
will end up with a significantly higher reward than malicious clients. 
Such a result is caused by Non-IID, 
which coincides with \Cref{fig:safe-attack}. 
Thus the effectiveness of the discard strategy is proved. 
Although some malicious clients survive the discarding strategy, 
the fair aggregation will still play a role as the second guarantee. 
In such a case, 
the rewards earned by malicious clients will be minimal, 
and the entire cumulative reward curve barely rises.
In addition, 
the final rewards obtained are different for honest clients, 
thus capturing the differences in contributions. 
This means that \Cref{alg:cii} achieves the goal of 
personalized incentives very well.

In summary, 
\emph{BFL-MEC} has incorporated multiple design 
aspects to ensure the security and privacy of the system:
\begin{inparaenum}[i)]
	\item \textbf{Post-Quantum Security.} 
	We employ a post-quantum 
	secure design to sign local gradients 
	to avoid interception and malicious tampering 
	during data transmission (see \Cref{fig:rsa}). 
	\item \textbf{Secure Data Sharing.} 
	The use of blockchain technology ensures immutability of the data, 
	making it difficult for attackers to modify or delete the recorded data.
	\item \textbf{Black-box attack resistance.} 
	We adopt \Cref{alg:cii} to uncover the variance in contribution 
	among clients and remove the low-contributing local gradients, 
	which could potentially be forged by malicious attackers, 
	to enhance the system's resilience to attacks.
	Fair aggregation as a second guarantee ensures 
	that malicious clients cannot get high rewards.
	\item \textbf{White-box attack resistance.} 
	To ensure the privacy and security of the data, 
	the blockchain doesn't record original local gradients on. 
	This prevents clients from accessing and 
	exploiting this sensitive information (see \Cref{pqc}).
\end{inparaenum}
Thus, 
\emph{BFL-MEC} provides the privacy and security guarantee by design
for the whole system dynamics.

\section{Conclusion} \label{sec-conclusions}
In this paper, 
we develop a blockchain-based federated learning framework 
for mobile edge computing, called BFL-MEC, 
which aims to provide the foundation for future paradigms 
in this field such as autonomous vehicles, 
mobile crowd sensing, and metaverse. 
We propose several mechanisms to address the performance, 
motivational, and security challenges faced by vanilla BFL. 
First, we propose a fully asynchronous design in which the 
client and edge nodes have independent working policies 
that can tolerate connection loss due to mobility in the MEC network. 
Second, we use a signature scheme based on post-quantum cryptography, 
which makes the proposed framework resilient to threats from quantum 
computers and reduces the latency of the signature-verification process. 
Third, 
we propose a contribution-based incentive mechanism that 
allows the use of multiple clustering algorithms to discover 
differences in client contributions and then assign personalized rewards. 
Finally, 
we developed a fair aggregation mechanism that computing 
the global model based on the weight of the client's contribution 
minimizes the impact of attacks from the clients themselves. 

\appendix
\section{Proof of THEOREM \ref{THEOREM1}}\label{proof-theorem1}
As mentioned, 
we denote the last global aggregation step $t$, 
and us the notation $\mathbb{E}(\cdot)$ as the expectation.
Using the perturbed iterative technique~\cite{maniaPerturbedIterateAnalysis2017}, a virtual sequence
\begin{equation}
	\tilde{\mathbf{x}}^{(0)}=\mathrm{x}^{(0)} \quad \tilde{\mathbf{x}}^{(t+1)}=\tilde{\mathbf{x}}^{(t)}-\eta \nabla F_{k_{t}}\left(\mathrm{x}^{(t)}, b_{t+\hat{\tau}_{t}}\right),
\end{equation}
can be defined with $\hat{\tau}_{t}$ as the delay involved in the gradient computation. 

Next, we have the following two lemmas: \Cref{LEMMA1} and \Cref{LEMMA2}.

\begin{lemma}[Descent Lemma]\label{LEMMA1}
	Given \Cref{ass-smooth,ass-ustrongly,ass-gradientbound} hold,
	\begin{equation*}
		\begin{aligned}
			\mathbb{E}_{t+1} f\left(\tilde{\mathrm{x}}^{(t+1)}\right) &\leq f\left(\tilde{\mathrm{x}}^{(t)}\right)-\frac{\eta}{4}\left\|\nabla f\left(\mathrm{x}^{(t)}\right)\right\|_{2}^{2}+\frac{L \eta^{2} \sigma^{2}}{2} \\ 
			&+L \eta^{2} \zeta^{2}+\frac{\eta L^{2}}{2}\left\|\mathrm{x}^{(t)}-\tilde{\mathrm{x}}^{(t)}\right\|_{2}^{2},
		\end{aligned}
	\end{equation*}
\end{lemma}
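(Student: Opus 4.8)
The plan is to analyze the one-step progress of the virtual sequence $\{\tilde{\mathbf{x}}^{(t)}\}$ rather than the physical iterates, since by construction the virtual sequence obeys the clean recursion $\tilde{\mathbf{x}}^{(t+1)}-\tilde{\mathbf{x}}^{(t)} = -\eta\nabla F_{k_t}(\mathbf{x}^{(t)}, b_{t+\hat\tau_t})$ driven by a single delayed stochastic gradient. First I would invoke the $L$-smoothness of the aggregate objective (\Cref{ass-smooth}) to obtain the quadratic upper bound
$$f(\tilde{\mathbf{x}}^{(t+1)}) \leq f(\tilde{\mathbf{x}}^{(t)}) + \langle \nabla f(\tilde{\mathbf{x}}^{(t)}), \tilde{\mathbf{x}}^{(t+1)}-\tilde{\mathbf{x}}^{(t)}\rangle + \frac{L}{2}\|\tilde{\mathbf{x}}^{(t+1)}-\tilde{\mathbf{x}}^{(t)}\|_2^2,$$
and then substitute the update rule. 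This splits the right-hand side into a linear term $-\eta\langle \nabla f(\tilde{\mathbf{x}}^{(t)}), \nabla F_{k_t}(\mathbf{x}^{(t)},b)\rangle$ and a quadratic term $\frac{L\eta^2}{2}\|\nabla F_{k_t}(\mathbf{x}^{(t)},b)\|_2^2$.

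Next I would take the conditional expectation $\mathbb{E}_{t+1}$ over the client draw $k_t$ and the minibatch $b$. Using unbiasedness of the stochastic local gradient, $\mathbb{E}_{t+1}[\nabla F_{k_t}(\mathbf{x}^{(t)},b)] = \nabla f(\mathbf{x}^{(t)})$, the linear term collapses to the deterministic inner product $-\eta\langle \nabla f(\tilde{\mathbf{x}}^{(t)}), \nabla f(\mathbf{x}^{(t)})\rangle$. I would then apply the polarization identity $-\langle a,b\rangle = \tfrac12(\|a-b\|^2 - \|a\|^2 - \|b\|^2)$ and discard the favorable $-\frac{\eta}{2}\|\nabla f(\tilde{\mathbf{x}}^{(t)})\|_2^2$, leaving $-\frac{\eta}{2}\|\nabla f(\mathbf{x}^{(t)})\|_2^2 + \frac{\eta}{2}\|\nabla f(\tilde{\mathbf{x}}^{(t)}) - \nabla f(\mathbf{x}^{(t)})\|_2^2$. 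The mismatch term is then controlled again by $L$-smoothness, $\|\nabla f(\tilde{\mathbf{x}}^{(t)}) - \nabla f(\mathbf{x}^{(t)})\|_2^2 \leq L^2\|\tilde{\mathbf{x}}^{(t)}-\mathbf{x}^{(t)}\|_2^2$, which yields precisely the staleness summand $\frac{\eta L^2}{2}\|\mathbf{x}^{(t)}-\tilde{\mathbf{x}}^{(t)}\|_2^2$.

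For the quadratic term I would decompose the second moment of the stochastic gradient into the within-client sampling noise, bounded by $\sigma^2$ via \Cref{ass-gradientbound}; the cross-client heterogeneity $\|\nabla F_{k_t}(\mathbf{x}^{(t)}) - \nabla f(\mathbf{x}^{(t)})\|_2^2$, bounded in expectation by the dissimilarity constant $\zeta^2$; and the squared full-gradient norm $\|\nabla f(\mathbf{x}^{(t)})\|_2^2$, checking that the cross terms vanish under $\mathbb{E}_{t+1}$. This produces the $\frac{L\eta^2\sigma^2}{2}$ and the $L\eta^2\zeta^2$ contributions (the latter collecting the heterogeneity appearing both in the second moment and in the inner-product expansion) together with a residual $\frac{L\eta^2}{2}\|\nabla f(\mathbf{x}^{(t)})\|_2^2$. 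Imposing the standard small-step condition $\eta \leq \frac{1}{2L}$ lets me absorb this residual, converting the $-\frac{\eta}{2}\|\nabla f(\mathbf{x}^{(t)})\|_2^2$ obtained above into the stated $-\frac{\eta}{4}\|\nabla f(\mathbf{x}^{(t)})\|_2^2$.

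The main obstacle is the asynchrony: descent is measured at the virtual point $\tilde{\mathbf{x}}^{(t)}$ while the gradient is evaluated at the possibly outdated physical point $\mathbf{x}^{(t)}$, so the linear term does not telescope as cleanly as in synchronous SGD. The decisive device is the perturbed-iterate representation combined with $L$-smoothness, which transfers the inner-product mismatch into the final summand $\frac{\eta L^2}{2}\|\mathbf{x}^{(t)}-\tilde{\mathbf{x}}^{(t)}\|_2^2$; bounding this gap explicitly in terms of the delay is what I would defer to \Cref{LEMMA2}, where \Cref{ass-gbound} enters. The remaining delicate point is the variance bookkeeping, ensuring that $\sigma^2$ and $\zeta^2$ are cleanly separated with the correct constants and that every cross term is annihilated by the conditional expectation.
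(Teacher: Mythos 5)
Your proposal follows essentially the same route as the paper's proof: $L$-smoothness applied to the virtual sequence, unbiasedness plus the polarization identity on the inner product, a second application of $L$-smoothness to convert the gradient mismatch into $\frac{\eta L^{2}}{2}\left\|\mathrm{x}^{(t)}-\tilde{\mathrm{x}}^{(t)}\right\|_{2}^{2}$, and the $\sigma^{2}/\zeta^{2}/\left\|\nabla f\right\|^{2}$ decomposition of the stochastic gradient's second moment. The one slip is in the final bookkeeping: the heterogeneity and full-gradient pieces of $\mathbb{E}_{t+1}\left\|\nabla F_{k_{t}}\left(\mathrm{x}^{(t)},b\right)\right\|_{2}^{2}$ are separated by Young's inequality rather than by vanishing cross terms (only the sampling-noise cross term is annihilated by the conditional expectation), so the residual is $L\eta^{2}\left\|\nabla f\left(\mathrm{x}^{(t)}\right)\right\|_{2}^{2}$ rather than $\frac{L\eta^{2}}{2}\left\|\nabla f\left(\mathrm{x}^{(t)}\right)\right\|_{2}^{2}$, and the stepsize condition needed to reach the stated $-\frac{\eta}{4}$ coefficient is $\eta\le\frac{1}{4L}$, as the paper requires, not $\eta\le\frac{1}{2L}$.
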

if and only if $\eta_{t} \leq \frac{1}{4L}$.

\begin{proof}
	Because the function $f$ is L-smooth (\Cref{ass-smooth}), 
	and taking an expectation over both the batch $b$ 
	and index $C_t$ of ``activated client'', we have 
	\begin{equation*}
		\begin{aligned}
			\mathbb{E}_{t+1} f\left(\tilde{\mathbf{x}}^{(t+1)}\right) 
			&=\mathbb{E}_{t+1} f\left(\tilde{\mathbf{x}}^{(t)}-\eta \nabla F_{k_{t}}\left(\mathbf{x}^{(t)}, b_{t+t_{t}}\right)\right) \\ 
			&\leq f\left(\tilde{\mathbf{x}}^{(t)}\right)-\eta \underbrace{\left\langle\nabla f\left(\tilde{\mathbf{x}}^{(t)}\right), \nabla f\left(\mathbf{x}^{(t)}\right)\right\rangle}_{=: Term_{1}} \\ 
			&+\mathbb{E}_{t+1} \frac{L}{2} \eta^{2} \underbrace{\left\|\nabla F_{k_{\mathrm{t}}}\left(\mathbf{x}^{(t)}, b_{t+t_{\mathrm{t}}}\right)\right\|_{2}^{2}}_{=: Term_{2}}.
		\end{aligned}
	\end{equation*}
	where $Term_{1}$
	\begin{equation*}
		\begin{aligned}
			& =-\frac{\eta}{2}\left\|\nabla f\left(\mathrm{x}^{(t)}\right)\right\|^{2}-\frac{\eta}{2}\left\|\nabla f\left(\tilde{\mathrm{x}}^{(t)}\right)\right\|^{2} \\ 
			&+\frac{\eta}{2}\left\|\nabla f\left(\mathrm{x}^{(t)}\right)-\nabla f\left(\tilde{\mathrm{x}}^{(t)}\right)\right\|^{2} \\ & \leq-\frac{\eta}{2}\left\|\nabla f\left(\mathrm{x}^{(t)}\right)\right\|^{2}+\frac{\eta}{2}\left\|\nabla f\left(\mathrm{x}^{(t)}\right)-\nabla f\left(\tilde{\mathrm{x}}^{(t)}\right)\right\|^{2}
		\end{aligned}
	\end{equation*}
	and $Term_{2}$
	\begin{equation*}
		\begin{aligned}
			&=\mathbb{E}_{t+1}\left\|\nabla F_{k_{t}}\left(\mathrm{x}^{(t)}, b_{t+\hat{\tau}_{t}}\right) \pm \nabla f_{j_{t}}\left(\mathrm{x}^{(t)}\right) \pm \nabla f\left(\mathrm{x}^{(t)}\right)\right\|_{2}^{2} \\ 
			&\leq \sigma^{2}+2\mathbb{E}_{k_{t}}\left\|\nabla f_{k_{t}}\left(\mathrm{x}^{(t)}\right)-\nabla f\left(\mathrm{x}^{(t)}\right)\right\|_{2}^{2}+2\left\|\nabla f\left(\mathrm{x}^{(t)}\right)\right\|_{2}^{2} \\ \quad 
			&\leq \sigma^{2}+2\zeta^{2}+2\left\|\nabla f\left(\mathrm{x}^{(t)}\right)\right\|_{2}^{2}
		\end{aligned}
	\end{equation*}
	Combining these two together and using \Cref{ass-smooth} to estimate $\left\|\nabla f\left(\mathrm{x}^{(t)}\right)-\nabla f\left(\tilde{\mathbf{x}}^{(t)}\right)\right\|_{2}^{2}$, 
	we have 
	\begin{equation*}
		\begin{aligned}
			\mathbb{E}_{t+1} f\left(\tilde{\mathrm{x}}^{(t+1)}\right) &\leq f\left(\tilde{\mathrm{x}}^{(t)}\right)-\left(\frac{\eta}{2}-L \eta^{2}\right)\left\|\nabla f\left(\mathrm{x}^{(t)}\right)\right\|_{2}^{2} \\
			&+\frac{\eta}{2} L^{2}\left\|\mathrm{x}^{(t)}-\tilde{\mathrm{x}}^{(t)}\right\|^{2}+\frac{L \eta^{2} \sigma^{2}}{2}+L \eta^{2} \zeta^{2}
		\end{aligned}
	\end{equation*}
	Applying $\eta \leq \frac{1}{4L}$ proves the  lemma. 
\end{proof}

To estimate the distance $\left\|\mathrm{x}^{(t)}-\tilde{\mathbf{x}}^{(t)}\right\|_{2}^{2}$, we have the following lemma.

\begin{lemma}\label{LEMMA2}
	Given \Cref{ass-smooth,ass-ustrongly,ass-gradientbound,ass-gbound} hold, 
	for $\eta_{t} \equiv \eta \leq \frac{1}{4L \tau_{C}}$, 
	
	\begin{equation*}
		\begin{aligned}
			\frac{1}{T+1} \sum_{t=0}^{T} \mathbb{E}\left\|\mathrm{x}^{(t)}-\tilde{\mathrm{x}}^{(t)}\right\|_{2}^{2} \leq \frac{\eta \sigma^{2}}{4L}+\eta^{2} \tau_{C}^{2} G^{2}
		\end{aligned}
	\end{equation*}
\end{lemma}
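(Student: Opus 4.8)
The plan is to control the drift $\norm{\mathrm{x}^{(t)}-\tilde{\mathrm{x}}^{(t)}}_2^2$ between the true iterate and the virtual sequence by exploiting that, by construction of the perturbed iterate, the two sequences differ only by the stochastic gradients that are still ``in flight'' at step $t$, i.e., those committed to one sequence but not yet to the other. First I would write this difference explicitly as $\mathrm{x}^{(t)}-\tilde{\mathrm{x}}^{(t)} = \eta \sum_{s\in\mathcal S_t}\nabla F_{k_s}(\mathrm{x}^{(s)}, b_{s+\hat\tau_s})$ for a set $\mathcal S_t$ of delayed indices, and argue that $\abs{\mathcal S_t}\le \tau_C^{(t)}\le \tau_C$, since at most $\tau_C$ clients perform concurrent local updates at any step by the Concurrency definition.

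The second step is a mean--noise decomposition of each stochastic gradient: write $\nabla F_{k_s}(\mathrm{x}^{(s)}, b_{s+\hat\tau_s}) = \nabla F_{k_s}(\mathrm{x}^{(s)}) + \xi_s$, where $\xi_s$ is the zero-mean stochastic noise with $\mathbb E\norm{\xi_s}^2\le\sigma^2$ by \Cref{ass-gradientbound}. Expanding the squared norm of the sum and taking expectation, the cross terms between the deterministic part and the conditionally zero-mean noise vanish, leaving a ``mean'' term $\norm{\sum_{s}\nabla F_{k_s}(\mathrm{x}^{(s)})}^2$ and a ``noise'' term $\mathbb E\norm{\sum_s\xi_s}^2$. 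The noise term, using conditional independence of the $\xi_s$ across distinct $s$, collapses to $\sum_s\mathbb E\norm{\xi_s}^2\le\tau_C\sigma^2$---crucially only \emph{linear} in $\tau_C$. The mean term I would bound by Cauchy--Schwarz (the inequality $\norm{\sum_{i=1}^m a_i}^2\le m\sum_i\norm{a_i}^2$) together with $\norm{\nabla F_{k_s}(\mathrm{x}^{(s)})}^2\le\mathbb E\norm{\nabla F_{k_s}(\mathrm{x}^{(s)}, b_{s})}^2\le G^2$ from \Cref{ass-gbound}, giving $\tau_C^2 G^2$. Combining yields the per-step estimate $\mathbb E\norm{\mathrm{x}^{(t)}-\tilde{\mathrm{x}}^{(t)}}_2^2\le \eta^2\tau_C\sigma^2+\eta^2\tau_C^2 G^2$.

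Finally, I would invoke the step-size restriction $\eta\le\frac{1}{4L\tau_C}$ to convert the variance term into the stated form: since $\eta\tau_C\le\frac{1}{4L}$, we have $\eta^2\tau_C\sigma^2 = \eta(\eta\tau_C)\sigma^2\le\frac{\eta\sigma^2}{4L}$. As the resulting bound $\frac{\eta\sigma^2}{4L}+\eta^2\tau_C^2 G^2$ is uniform in $t$, averaging over $t=0,\dots,T$ preserves it and delivers the claim.

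The main obstacle I anticipate is Step 1: precisely characterizing the in-flight set $\mathcal S_t$ and rigorously bounding its cardinality by the maximum concurrency $\tau_C$. This requires carefully tracking, within the perturbed-iterate construction of \cite{maniaPerturbedIterateAnalysis2017,koloskovaSharperConvergenceGuarantees2022}, which gradients have and have not been applied to each sequence, and tying the count to the Concurrency definition. By comparison, the mean--noise split is routine, though one must check that the conditioning (the filtration $\mathcal F_s$ generated by the iterates up to the computation of each $\xi_s$) genuinely makes the cross terms vanish and the noise variances additive, so that the noise term stays linear rather than quadratic in $\tau_C$.
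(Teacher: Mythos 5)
Your proposal follows essentially the same route as the paper's proof: express $\mathrm{x}^{(t)}-\tilde{\mathrm{x}}^{(t)}$ as $\eta$ times the sum of in-flight gradients over the concurrency set, split off the stochastic noise to get a term linear in $\tau_C$ (namely $\eta^2\tau_C\sigma^2$), bound the mean term by Cauchy--Schwarz and \Cref{ass-gbound} to get $\eta^2\tau_C^2 G^2$, and finally apply $\eta\leq\frac{1}{4L\tau_C}$ to obtain $\frac{\eta\sigma^2}{4L}$. Your write-up is in fact more explicit than the paper's about why the cross terms vanish and the noise variances add; the paper asserts the linear-in-$\tau_C$ variance bound without that justification.
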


\begin{proof}
	Using the steps as above, we can obtain 
	\begin{equation*}
		\begin{aligned}
			\mathbb{E}\left\|\mathrm{x}^{(t)}-\tilde{\mathbf{x}}^{(t)}\right\|_{2}^{2} 
			& =\mathbb{E} \eta^{2}\left\|\sum_{i \in \mathcal{C}_{t}} \nabla F_{j_{i}}\left(\mathbf{x}^{(i)}, b_{i+\hat{\tau}_{i}}\right)\right\|_{2}^{2} \\
			&\leq \eta^{2} \tau_{C} \sigma^{2}+\eta^{2} \mathbb{E} \sum_{i \in \mathcal{C}_{t}} \nabla f_{j_{i}}\left(\mathrm{x}^{(i)}\right) \|_{2}^{2} \\ 
			&\leq \eta^{2} \tau_{C} \sigma^{2}+\eta^{2} \tau_{C} \sum_{i \in \mathcal{C}_{t}} \mathbb{E}\left\|\nabla f_{j_{i}}\left(\mathrm{x}^{(i)}\right)\right\|_{2}^{2} \\ 
			&\leq \eta^{2} \tau_{C} \sigma^{2}+\eta^{2} \tau_{C}^{2} G^{2} \\ 
			&\leq \frac{\eta \sigma^{2}}{4L}+\eta^{2} \tau_{C}^{2} G^{2},
		\end{aligned}
	\end{equation*}
	where we applying $\eta \leq \frac{1}{4L \tau_{C}}$ on the last line proves the lemma.
\end{proof}

Next, we provide the proof of \Cref{THEOREM1}. 

\begin{proof}
	First, compute the average of \Cref{LEMMA1}, as follows
	\begin{equation*}
		\begin{aligned}
			\frac{1}{T+1} \sum_{t=0}^{T} \mathbb{E}\left\|\nabla f\left(\mathrm{x}^{(t)}\right)\right\|_{2}^{2} 
			&\leq \frac{4}{\eta(T+1)}\left(f\left(\mathrm{x}^{0}\right)-f\left(\mathrm{x}^{T}\right)\right) \\ 
			+2L \eta \sigma^{2}+4L \eta \zeta^{2}  
			&+\frac{2L^{2}}{T+1} \sum_{t=0}^{T} \mathbb{E}\left\|\mathrm{x}^{(t)}-\tilde{\mathrm{x}}^{(t)}\right\|_{2}^{2}.
		\end{aligned}
	\end{equation*}
	Then plug the results of \Cref{LEMMA2} into above equation 
	\begin{equation*}
		\begin{aligned}
			\frac{1}{T+1} \sum^{T} \mathbb{E}\left\|\nabla f\left(\mathrm{x}^{(t)}\right)\right\|_{2}^{2} 
			&\leq \frac{4}{\eta(T+1)}\left(f\left(\mathrm{x}^{0}\right)-f\left(\mathrm{x}^{T}\right)\right) \\ 
			&+3L \eta \sigma^{2}+4L \eta \zeta^{2}+2L^{2} \eta^{2} \tau_{C}^{2} G^{2}
		\end{aligned}
	\end{equation*}
	Tuning the stepsize of \Cref{LEMMA1} along the lines of \cite{koloskovaUnifiedTheoryDecentralized2020}, proves \Cref{THEOREM1}.
\end{proof}

\bibliography{fair-bfl}
\bibliographystyle{IEEEtran}

\end{document}